\newenvironment{proof}{\noindent {\textbf{Proof:}}\rm}{\hfill $\Box$
\rm}
\def\calA{\mathcal{A}}
\newtheorem{Theo}{Theorem}
\newtheorem{Lem}{Lemma}
\begin{document}

\baselineskip=14.0pt

\title{
\vspace*{-0.55in}
Optimal Point Movement for Covering Circular Regions
}

\author{
Danny Z. Chen\thanks{Dept.~of Computer Science and Engineering,
University of Notre Dame, Notre Dame, IN 46556, USA; { \{dchen,
hwang6\}@nd.edu.} These authors' work was supported in part by NSF
under Grant CCF-0916606. } \hspace*{0.1in} Xuehou Tan\thanks{Tokai
University, 4-1-1 Kitakaname, Hiratsuka 259-1292, Japan; {
tan@wing.ncc.u-tokai.ac.jp}.} \hspace*{0.1in} Haitao
Wang\footnotemark[1]  \thanks{Corresponding author.} \hspace*{0.1in}
Gangshan Wu\thanks{State Key Lab.~for Novel Software Technology,
Nanjing University, Nanjing 210093, China; {
gswu@graphics.nju.edu.cn}.}}

\date{ }
\maketitle

\pagestyle{plain}
\pagenumbering{arabic}
\setcounter{page}{1}

\vspace*{-0.3in}
\begin{abstract}
Given $n$ points in a circular region $C$ in the plane, we study the problems of
moving the $n$ points to its boundary to form a regular $n$-gon such that
the maximum (min-max) or the sum (min-sum) of the Euclidean
distances traveled by the points is minimized. The problems have
applications, e.g., in mobile sensor barrier coverage of wireless
sensor networks.
The min-max problem further has two versions: the decision version and optimization
version. For the min-max problem, we present an $O(n\log^2 n)$ time
algorithm for the decision version and an $O(n\log^3 n)$ time algorithm for
the optimization version. The
previously best algorithms for the two problem versions take $O(n^{3.5})$ time
and $O(n^{3.5}\log n)$ time, respectively. For the min-sum problem,
we show that a special case with all points initially lying on the
boundary of the circular region can be solved in $O(n^2)$ time,
improving a previous $O(n^4)$ time solution. For the general
min-sum problem, we present a $3$-approximation $O(n^2)$ time
algorithm, improving the previous $(1+\pi)$-approximation $O(n^2)$ time algorithm.
A by-product of our techniques is an algorithm for dynamically
maintaining the maximum matching of a circular convex bipartite
graph; our algorithm can handle each vertex insertion or deletion on
the graph in $O(\log^2 n)$ time. This result is interesting in its
own right.
\end{abstract}

\section{Introduction}

Given $n$ points in a circular region $C$ in the plane, we study the problems of
moving the $n$ points to its boundary to form a regular $n$-gon such that
the maximum (min-max) or the sum (min-sum) of the Euclidean
distances traveled by the points is minimized. The problems have
applications, e.g., in mobile sensor barrier coverage of wireless
sensor networks. The problems have
been studied before. In this paper we present new algorithms that significantly
improve the previous solutions for the problems.

\subsection{Problem Definitions}

Let $|ab|$ denote the Euclidean length of the line segment with two
endpoints $a$ and $b$ in the plane. Let $C$ be a circular region in
the plane. Given a set of $n$ points $S=\{A_0,A_1,\ldots,A_{n-1}\}$ in
$C$ (i.e., in its interior or on its boundary), we wish to move all sensors to
$n$ points $A'_0,A'_1,\ldots, A'_{n-1}$ on the boundary of $C$ that
form a regular $n$-gon. The {\em min-max} problem aims to minimize
the maximum Euclidean distance traveled by all points, i.e.,
$\max_{0\leq i\leq n-1}\{|A_{i} A'_{i}|\}$. The {\em min-sum}
problem aims to minimize the sum of the Euclidean distances traveled
by all points, i.e., $\sum_{i=0}^{n-1} |A_{i} A'_{i}|$.

Further, given a value $\lambda\geq 0$, the {\em decision version} of
the min-max problem is to determine whether it is possible to move all points in
$S$ to the boundary of $C$ to form a regular $n$-gon such that the distance
traveled by each point is no more than $\lambda$. Indeed,
let $\lambda_C$ be the maximum distance traveled by the points in
an optimal solution for the min-max problem. Then, the answer to the
feasibility problem is ``yes" if and only if $\lambda_C\leq
\lambda$. For discrimination, we refer to the original min-max
problem as the {\em optimization version} of the min-max problem.

For the min-sum problem, if the points in $S$ are given initially all lying on the
boundary of $C$, then this case is referred to as the {\em boundary
case} of the min-sum problem.

\subsection{Applications in Wireless Sensor Networks}

A Wireless Sensor Network (WSN) is composed of a large number of
sensors which monitor some surrounding environmental phenomenon.
Usually, the sensors are densely deployed either inside the target phenomenon or
are very close to it \cite{ref:AkyildizWi02}. Each sensor
is equipped with a sensing device
with limited battery-supplied energy.
The sensors process data obtained and forward the data to a base station.
 A typical type of WSN applications
 is concerned with security and safety systems, such as detecting intruders
(or movement thereof) around infrastructure facilities and regions. Particularly,
it is often used to monitor a protected area so as to detect intruders
as they penetrate the area or as they cross the area border. For example, research
efforts have been under way to extend the scalability of wireless sensor networks
to the monitoring of international borders \cite{ref:Hu08,ref:KumarBa07}.

The study of {\it barrier coverage} using mobile sensors was originated
in \cite{ref:ChenDe07,ref:KumarBa07} and later in \cite{ref:BhattacharyaOp09}.
Different from the traditional concept of {\it full coverage},
it seeks to cover the deployment region by guaranteeing that there is no
path through the region that can be traversed undetectedly by an intruder,
i.e., all possible crossing paths through the region
are covered by the sensors \cite{ref:BhattacharyaOp09,ref:ChenDe07,ref:KumarBa07}.
Hence, an interesting problem is
to reposition the sensors quickly so as to repair the existing security hole and
thereby detect intruders \cite{ref:BhattacharyaOp09}. Since barrier coverage
requires fewer sensors for detecting intruders, it gives a good
approximation of full area coverage. The planar region on which the sensors move
is sometimes represented by a circle.
Since sensors have limited battery-supplied energy, we wish
to minimize their movement. Thus, if each sensor is
represented as a point, the problem is exactly our optimal point
movement min-max (the optimization version) or min-sum problem.
Further, if each sensor has energy $\lambda$ and we want to determine
whether this level of energy is sufficient to form a barrier coverage,
then the problem becomes the decision version of the min-max problem.



\subsection{Previous Work and Our Results}

For the min-max problem, Bhattacharya {\em et al.}~\cite{ref:BhattacharyaOp09}
proposed an $O(n^{3.5})$ time algorithm
for the decision version and an $O(n^{3.5} \log n)$ time algorithm
for the optimization version, where the decision algorithm is
based on some observations and brute force and the optimization
algorithm is based on parametric search approach
\cite{ref:ColeSl87,ref:MegiddoAp83}.
Recently, it was claimed in \cite{ref:TanNe10} that these two
problem versions were solvable in $O(n^{2.5})$ time and
$O(n^{2.5}\log n)$ time, respectively. However, it seems that the
announced algorithms in \cite{ref:TanNe10} contain errors (which
might be fixed, say, by using the methods given in this paper).
In this paper, we solve the decision version in $O(n\log^2 n)$ time and the
optimization version in $O(n\log^3n)$ time, which significantly
improve the previous results. The improvements of our algorithms are
based on new observations and interesting techniques.

A by-product of our techniques that is interesting in its own right
is an algorithm for dynamically maintaining the maximum matchings of
{\em circular convex bipartite graphs}. Our
algorithm handles each (online) vertex insertion or deletion on
an $n$-vertex circular convex bipartite graph in $O(\log^2 n)$ time.
This matches the performance of the best known
dynamic matching algorithm for {\em convex bipartite graphs}
\cite{ref:BrodalDy07}. Note that convex bipartite graphs are a subclass of
circular convex bipartite graphs \cite{ref:LiangCi95}.  To our best knowledge,
no dynamic matching algorithm for circular convex bipartite graphs was known before.
Since dynamically maintaining the maximum
matching of a graph is a basic problem, our result
may find other applications.

For the min-sum problem, an $O(n^2)$ time approximation algorithm with approximation
ratio $1+\pi$ was given in \cite{ref:BhattacharyaOp09}. A PTAS approximation
algorithm, which has a substantially larger polynomial time bound,
was also given in \cite{ref:BhattacharyaOp09}. In this paper,
we present an $O(n^2)$ time approximation algorithm with approximation
ratio $3$, which improves the $(1+\pi)$-approximation result in
\cite{ref:BhattacharyaOp09}. However, whether the general min-sum problem is
NP-hard is still left open.

For the boundary case of the min-sum problem, an $O(n^4)$ time
(exact) algorithm was given in \cite{ref:TanNe10}. We show that the
time bound of that algorithm can be reduced to $O(n^2)$.

The rest of this paper is organized as follows. Our algorithm for the
decision version of the min-max problem is given in Section
\ref{sec:decision}, and our algorithm for the optimization version
is presented in Section \ref{sec:optimization}. The min-sum problem
is discussed in Section \ref{sec:minsum}.

To distinguish from a normal point in the plane, in the following
paper we refer to each
point $A_i\in S$ as a sensor.

\section{The Decision Version of the Min-max Problem}
\label{sec:decision}

For simplicity, we assume the radius of the circle $C$ is $1$. Denote by
$\partial C$ the boundary of $C$. Let $\lambda_C$ be the
maximum distance traveled by the sensors in $S$ in an optimal solution for
the min-max problem, i.e., $\lambda_C =\min \{\max_{0\leq i\leq n-1}
\{|A_{i} A'_{i}| \}\}$. Since the sensors are all in $C$, $\lambda_C \leq 2$.
In this section, we consider the decision version of the min-max
problem on $C$: Given a value $\lambda$, determine whether $\lambda_C\leq
\lambda$. We present an $O(n\log^2 n)$ time algorithm for this problem.

\subsection{An Algorithm Overview}

We first discuss some concepts. A
{\it bipartite} graph $G = (V_1, V_2, E)$ with $|V_1|=O(n)$ and $|V_2|=O(n)$ is
{\it convex} on the vertex set $V_2$ if there is a linear ordering
on $V_2$, say, $V_2=\{v_0,v_1,\ldots,v_{n-1}\}$, such that if any two
edges $(v, v_j) \in E$ and $(v, v_k) \in E$ with $v_j, v_k \in V_2$,
$v\in V_1$, and $j < k$, then $(v, v_l) \in E$ for all $j \leq l
\leq k$. In other words, for any vertex $v\in V_1$, the subset of
vertices in $V_2$ connected to $v$ forms an interval on the linear
ordering of $V_2$. For any $v\in V_1$, suppose the subset of vertices
in $V_2$ connected to $v$ is $\{v_j,v_{j+1},\ldots,v_k\}$; then we denote
$begin(v,G)=j$ and $end(v,G)=k$. Although $E$ may have $O(n^2)$
edges, it can be represented implicitly by specifying $begin(v,G)$ and
$end(v,G)$ for each $v\in V_1$.  A {\em vertex insertion} on $G$ is
to insert a vertex $v$ into $V_1$ with an edge interval
$[begin(v,G),end(v,G)]$ and implicitly connect $v$ to every $v_i\in
V_2$ with $begin(v,G)\leq i\leq end(v,G)$. Similarly, a {\em vertex
deletion} on $G$ is to delete a vertex $v$ from $V_1$ as well as all its
adjacent edges.

A bipartite graph $G= (V_1, V_2, E)$ is {\it circular convex} on
the vertex set $V_2$ if there is a circular ordering on $V_2$ such
that for each vertex $v \in V_1$, the subset of vertices in $V_2$
connected to $v$ forms a circular-arc interval on that ordering. Precisely,
suppose such a {\em clockwise} circular ordering of $V_2$ is $v_0,v_1,\ldots,v_{n-1}$.
For any two edges $(v, v_j) \in E$ and $(v, v_k) \in E$ with
$v_j, v_k \in V_2$, $v\in V_1$, and $j< k$, either
$(v, v_l) \in E$ for all $j \leq l \leq k$, or $(v, v_l) \in E$ for all $k
\leq l \leq n-1$ and $(v, v_l) \in E$ for all $0 \leq l \leq j$. For
each $v\in V_1$, suppose the vertices of $V_2$ connected to $v$ are
from $v_j$ to $v_k$ clockwise on the ordering, then $begin(v,G)$ and
$end(v,G)$ are defined to be $j$ and $k$, respectively.
Vertex insertions and deletions on $G$ are defined similarly.


A maximum matching in a convex bipartite graph can be found in
$O(n)$ time \cite{ref:GabowA85,ref:LipskiEf81,ref:SteinerA96}. The
same time bound holds for a circular convex bipartite graph
\cite{ref:LiangCi95}. Brotal {\em et al.}~\cite{ref:BrodalDy07} designed
a data structure for dynamically maintaining the maximum matchings of
a convex bipartite graph that can support each vertex insertion or
deletion in $O(\log^2 n)$ amortized time. For circular convex
bipartite graphs, however, to our best knowledge, we are not aware of
any previous work on dynamically maintaining their maximum matchings.

The main idea of our algorithm for the decision version of the
min-max problem is as follows. First, we model the problem as
finding the maximum matchings in a sequence of $O(n)$ circular convex
bipartite graphs, which is further modeled as dynamically
maintaining the maximum matching of a circular convex bipartite
graph under a sequence of $O(n)$ vertex insertion and deletion operations.
Second, we develop an approach for solving the latter problem.
Specifically, we show that the maximum matching of a circular convex
bipartite graph of $O(n)$ vertices can be dynamically maintained in $O(\log^2 n)$ time
(in the worst case) for each vertex insertion or deletion. Note that
this result is of independent interest.


In the following, we first present the problem modeling and then
give our algorithm for dynamically maintaining the maximum matching of
a circular convex bipartite graph.

\subsection{The Problem Modeling}
\label{subsec-model}

Recall that in the decision version of the min-max problem,
our goal is to determine whether $\lambda_C\leq
\lambda$. Let $P$ be an arbitrary regular $n$-gon with its vertices
$P_0,P_1,\ldots,P_{n-1}$ ordered clockwise on $\partial C$. We first
consider the following sub-problem: Determine whether we can move all sensors to
the vertices of $P$ such that the maximum distance traveled by the
sensors is at most $\lambda$. Let $G_P$ be the bipartite graph
between the sensors $A_0,\ldots,A_{n-1}$ and the vertices of $P$,
such that a sensor $A_i$ is connected to a vertex $P_j$ in $G_P$ if and only
if $|A_iP_j|\leq \lambda$.  The next lemma is immediate.
\begin{Lem}\label{lem:cir-conv}
The bipartite graph $G_P$ is circular convex.
\end{Lem}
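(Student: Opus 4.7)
The plan is to prove this by a direct geometric argument: for each sensor, the set of boundary points of $C$ within distance $\lambda$ from it is a single arc of $\partial C$, and the vertices of $P$ that fall in this arc therefore form a contiguous block in the clockwise ordering $P_0,\ldots,P_{n-1}$.

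First, I would fix an arbitrary sensor $A_i \in S$ and consider the closed disk $D_i$ of radius $\lambda$ centered at $A_i$. By definition, a vertex $P_j$ of $P$ is adjacent to $A_i$ in $G_P$ if and only if $P_j \in D_i$, i.e., $P_j \in D_i \cap \partial C$. So it suffices to show that $D_i \cap \partial C$ is a single (possibly empty or full) arc of $\partial C$.

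Next I would invoke the standard fact that two distinct circles in the plane intersect in at most two points; applying this to $\partial C$ and $\partial D_i$, the circle $\partial C$ is cut by $\partial D_i$ into at most two arcs, exactly one of which lies in the interior of $D_i$ (and its endpoints on $\partial D_i$ belong to $D_i$ as well, since $D_i$ is closed). If $\partial D_i$ misses $\partial C$ entirely, then $D_i \cap \partial C$ is either empty or all of $\partial C$, both of which are trivially circular-arc intervals. This shows $D_i \cap \partial C$ is always a (possibly degenerate) arc of $\partial C$.

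Finally, since $P_0,\ldots,P_{n-1}$ are listed in clockwise order on $\partial C$, the vertices falling inside any fixed arc of $\partial C$ form a contiguous block in the circular ordering. Hence the neighborhood of $A_i$ in $G_P$ is a circular-arc interval on $\{P_0,\ldots,P_{n-1}\}$, which is exactly the condition in the definition of circular convex bipartite graphs on the vertex side $V_2=\{P_0,\ldots,P_{n-1}\}$. Since $A_i$ was arbitrary, $G_P$ is circular convex, completing the proof. There is no real obstacle here; the only thing to be careful about is the degenerate cases (the disk $D_i$ containing or missing $\partial C$ entirely), which are handled by observing that empty and full arcs still count as circular-arc intervals.
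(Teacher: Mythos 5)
Your proof is correct and is essentially the paper's own argument, just written out in full: the paper's one-line proof invokes exactly the fact that $\partial D_i$ and $\partial C$ meet in at most two points, so that the neighbors of each sensor form an arc and hence a circular-arc interval of $\{P_0,\ldots,P_{n-1}\}$. The extra care you take with the degenerate cases (empty or full arc) is fine and does not change the approach.
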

\begin{proof}
This simply follows from the fact that the boundary of any circle of radius $\lambda$
can intersect $\partial C$ at most twice.
\end{proof}

To solve the above sub-problem, it suffices to compute a maximum matching $M$
in the circular convex bipartite graph $G_P$ (by using the algorithm
in \cite{ref:LiangCi95}). If $M$ is a
perfect matching, then the answer to the sub-problem is ``yes";
otherwise, the answer is ``no". Thus, the sub-problem can be solved in
$O(n)$ time (note that the graph $G_P$ can be constructed implicitly
in $O(n)$ time, after $O(n\log n)$ time preprocessing).
If the answer to the sub-problem is ``yes", then we say
that $P$ is {\em feasible} with respect to the value $\lambda$.

If $P$ is feasible, then clearly $\lambda_C\leq \lambda$.
If $P$ is not feasible, however, $\lambda_C> \lambda$ does not necessarily hold,
because $P$ may not be positioned ``right" (i.e., $P$ may not be the regular
$n$-gon in an optimal solution of
the optimization version of the min-max problem). To further
decide whether $\lambda_C\leq \lambda$, our strategy is to rotate
$P$ clockwise on $\partial C$ by an arc distance at most $2\pi/n$.
Since the perimeter of $C$ is $2\pi$, the arc distance
between any two neighboring vertices of $P$ is $2\pi/n$. A simple
yet critical observation is that $\lambda_C\leq \lambda$ if and only
if during the rotation of $P$, there is a moment (called a {\em
feasible moment}) at which $P$
becomes feasible with respect to $\lambda$. 
Thus, our task is to determine whether a feasible moment exists
during the rotation of $P$.

Consider the graph $G_P$. For each sensor $A_i$, denote by
$E(A_i)=\{P_j,P_{j+1},\ldots,P_k\}$ the subset of vertices of $P$ connected to
$A_i$ in $G_P$, where the indices of the vertices of $P$ are
taken as module by $n$. We assume that $E(A_i)$ does not contain all
vertices of $P$ (otherwise, it is trivial). Since the arc distance
from $P_{j-1}$ to $P_j$ is $2\pi/n$, during the (clockwise) rotation
of $P$, there must be a moment after which $P_{j-1}$ becomes connected to
$A_i$, and we say that $P_{j-1}$ is {\em added} to $E(A_i)$; similarly,
there must be a moment after which $P_{k}$ becomes disconnected to $A_i$,
and we say that $P_{k}$ is {\em removed} from $E(A_i)$.
Note that these are the moments when the edges of $A_i$ (and thus the graph $G_P$)
are changed due to the rotation of $P$.  Also, note that during the rotation, all
vertices in $E(A_i)\setminus \{P_k\}$ remain connected to $A_i$ and
all vertices in $P\setminus \{E(A_i) \cup \{P_{j-1}\}\}$ remain
disconnected to $A_i$. Hence throughout this rotation,
there are totally $n$ additions and $n$ removals on the graph $G_P$.
If we sort all these additions and removals based on the time
moments when they occur, then we obtain a sequence of $2n$ circular
convex bipartite graphs, and determining whether there exists a
feasible moment is equivalent to determining whether there is a graph
in this sequence that has a perfect matching. With the $O(n)$ time maximum matching
algorithm for circular convex bipartite graphs of $n$ vertices in
\cite{ref:LiangCi95}, a straightforward solution for determining whether there
is a feasible moment would take $O(n^2)$ time.

To obtain a faster algorithm, we further model the problem as follows.
Consider the addition of $P_{j-1}$ to $E(A_i)$. This can be done
by first deleting the vertex of $G_P$ corresponding to $A_i$ and
then inserting a new vertex corresponding to $A_i$ with its edges connecting
to the vertices in $\{P_{j-1}\}\cup E(A_i)$.
The removal of $P_k$ from $E(A_i)$ can be handled similarly. Thus,
each addition or removal on $E(A_i)$ can be transformed to one vertex
deletion and one vertex insertion on $G_P$. If we sort all vertex
updates (i.e., insertions and deletions) by the time moments when
they occur, then the problem of determining whether there is a feasible moment is
transformed to determining whether there exists a perfect matching
in a sequence of vertex updates on the graph $G_P$. In other words, we need to
dynamically maintain the maximum matching in a circular convex
bipartite graph to support a sequence of $2n$ vertex insertions and
$2n$ vertex deletions.  This problem is handled in the next
subsection.

\subsection{Dynamic Maximum Matching in a Circular Convex Bipartite Graph}

In this subsection, we consider the problem of dynamically maintaining the
maximum matching in a circular convex bipartite graph to support vertex
insertions and deletions. We treat all vertex updates in an online fashion.

Let $G = (V_1, V_2, E)$ with $|V_1|= O(n)$ and $|V_2|=O(n)$ be a circular convex
bipartite graph on the vertex set $V_2$, i.e., the vertices of $V_2$
connected to each vertex in $V_1$ form a circular-arc interval on the
sequence of the vertex indices of $V_2$. Suppose
$V_2=\{v_0,v_1,\ldots,v_{n-1}\}$ is ordered clockwise. Recall that a
vertex insertion on $G$ is to insert a vertex $v$ into $V_1$ with an edge interval
$[begin(v,G),end(v,G)]$ such that $v$ is (implicitly) connected to all
vertices of $V_2$ from $begin(v,G)$ clockwise to $end(v,G)$.
A vertex deletion is to delete a
vertex $v$ from $V_1$ and all its adjacent edges (implicitly).  Our
task is to design an algorithm for maintaining the maximum matching of
$G$ to support such update operations (i.e., vertex insertions and
deletions) efficiently. Below, we present an algorithm with an $O(\log^2n)$ time per
update operation.

Our approach can be viewed as a combination of the data structure in
\cite{ref:BrodalDy07} for dynamically maintaining the maximum
matching in a convex bipartite graph and the linear time algorithm
in \cite{ref:LiangCi95} for computing a maximum matching in a
circular convex bipartite graph. We refer to them as the BGHK data
structure \cite{ref:BrodalDy07} and the LB algorithm
\cite{ref:LiangCi95}, respectively. We first briefly describe
the BGHK data structure and the LB algorithm.

The BGHK data structure \cite{ref:BrodalDy07} is a binary tree $T$,
and each node of $T$ maintains a balanced binary tree. This data structure
can be constructed in $O(n\log^2 n)$ time and can
support each vertex insertion or deletion in $O(\log^2 n)$ amortized time.
Consider a vertex insertion, i.e., inserting a vertex $v$ into $V_1$. Let
$M'$ (resp., $M$) be the maximum matching in the graph before
(resp., after) the insertion. Let $|M|$ denote the number of matched
pairs in $M$.  After the data structure is updated (in $O(\log^2 n)$
amortized time), the value $|M|$ can be reported in $O(1)$ time
and $M$ can be reported in $O(|M|)$ time. We can also determine in
$O(1)$ time whether $v$ is matched in $M$. Further, if another
vertex $v'\in V_1$ was matched in $M'$ but is not matched in $M$,
then it is easy to see that $v$ must be matched in $M$. When this
case occurs, we say that $v$ {\em replaces} $v'$ and $v'$ is called
the {\em replacement}, and the data structure is able to report the
replacement in $O(1)$ time. Note that as shown in
\cite{ref:BrodalDy07}, although an update on the graph can cause
dramatic changes on the maximum matching, the sets of the
matched vertices in $V_1$ (and $V_2$) can change by at most one
vertex. Thus, there is at most one such replacement $v'$.  Similarly,
consider deleting a vertex $v$ from $V_1$. After the data structure
is updated, the value $|M|$ can be reported in $O(1)$ time and $M$
can be reported in $O(|M|)$ time. We can also find out whether $v$ was
matched in $M'$ in $O(1)$ time. If a vertex $v'\in V_1$ was not
matched in $M'$ but is matched in $M$,
then it is easy to see that $v$ must be matched in $M'$. When this
case occurs, we say $v'$ is the {\em
supplement}, which can be determined in $O(1)$ time.

The LB algorithm \cite{ref:LiangCi95} finds a maximum matching in
a circular convex bipartite graph $G=(V_1,V_2,E)$ by reducing the
problem to two sub-problems of computing the maximum matchings in two
convex bipartite graphs $G_1$ and $G_2$. Some details are summarized
below. For any vertex $v\in V_1$, if $begin(v,G)\leq end(v,G)$, then
$v$ is called a {\em non-boundary} vertex. Otherwise, $v$ is a {\em
boundary} vertex; the edges connecting $v$ to
$v_{begin(v,G)},v_{begin(v,G)+1},\ldots,v_{n-1}$ in $V_2$ are called {\em lower
edges}, and the other edges connecting $v$ are {\em upper edges}.
Based on the graph $G$, a convex bipartite graph
$G_1=(V_1,V_2,E_1)$ is defined as follows. Both its vertex sets are
the same as those in $G$. For each vertex $v\in V_1$ in $G$,
$begin(v,G_1)=begin(v,G)$; if $v$ is a non-boundary vertex,
then $end(v,G_1)=end(v,G)$, and otherwise $end(v,G_1)=n-1+end(v,G)$ (note
that this value of $end(v,G_1)$ is used only for comparison in the algorithm
although there are not so many vertices in $V_2$). The LB
algorithm has two main steps. The first step is to compute a
maximum matching in $G_1$, which can be done in $O(n)$ time
\cite{ref:GabowA85,ref:LipskiEf81,ref:SteinerA96}. Let $M(G_1)$ be
the maximum matching of $G_1$. Next, another convex
bipartite graph $G_2=(V_1,V_2,E_2)$ is defined based on $M(G_1)$ and $G$, as
follows. Both its vertex sets are the same as those in $G$. For each
non-boundary vertex $v\in V_1$ in $G$, $begin(v,G_2)=begin(v,G)$ and
$end(v,G_2)=end(v,G)$. For each boundary vertex $v\in V_1$ in $G$,
there are two cases: If $v$ is matched in $M(G_1)$, then
$begin(v,G_2)=begin(v,G)$ and $end(v,G_2)=n-1$; otherwise,
$begin(v,G_2)=0$ and $end(v,G_2)=end(v,G)$. The second step of the LB
algorithm is to compute a maximum matching in $G_2$ (in $O(n)$
time), denoted by $M(G_2)$. It was shown in \cite{ref:LiangCi95}
that $M(G_2)$ is also a maximum matching of the original graph $G$.


We now discuss our algorithm for dynamically maintaining a
maximum matching in the circular convex bipartite graph $G$. As
preprocessing, we first run the LB algorithm on $G$, after which
both the convex bipartite graphs $G_1$ and $G_2$ of $G$ are
available. We then build two BGHK data structures for $G_1$ and
$G_2$, denoted by $T(G_1)$ and $T(G_2)$, respectively, for
maintaining their maximum matchings. This completes the preprocessing,
which takes $O(n\log^2 n)$ time. In the following, we discuss how to perform
vertex insertions and deletions.

Consider a vertex insertion, i.e., inserting a vertex $v$ into $V_1$ with the edge
interval $[begin(v,G)$, $end(v,G)]$. To perform this insertion, intuitively, we need to
update the two BGHK data structures $T(G_1)$ and $T(G_2)$ in a way that mimics some
behavior of the LB algorithm. Specifically, we first insert $v$ into the graph
$G_1$ by updating $T(G_1)$. Based on the results on $G_1$
(e.g., whether there is a replacement) and the behavior of the LB algorithm,
we modify $G_2$ by updating $T(G_2)$ accordingly.
In this way, the maximum matching maintained by
$T(G_2)$ is the maximum matching of $G$ after the insertion.
The details are given below.

Let $G_1'$ and $G_2'$ be the two graphs
that would be produced by running the LB algorithm on $G$ with the new vertex $v$
(and its adjacent edges). Let $M(G_1), M(G_2), M(G_1')$, and $M(G_2')$ be the maximum
matchings of $G_1, G_2, G_1'$, and $G_2'$, respectively.
Depending on whether $v$ is a boundary
vertex, there are two main cases.

\begin{itemize}
\item
If $v$ is a non-boundary vertex (i.e., $begin(v,G)\leq end(v,G)$), then
$G_1'$ can be obtained by inserting $v$ into $G_1$. Hence we insert $v$ into
$T(G_1)$. Depending on whether there is a replacement, there are two
cases.

\begin{itemize}
\item
If no replacement, then $G_2'$ can be obtained by inserting $v$
into $G_2$. Thus, we simply insert $v$ into $T(G_2)$ and we
are done.

\item
Otherwise, let $v'$ be the replacement. So $v'$ was matched in
$M(G_1)$ but is not matched in $M(G_1')$.
Depending on whether $v'$ is a boundary vertex, there are two
subcases.

\begin{itemize}
\item
If $v'$ is a non-boundary vertex, then again, $G_2'$ can be obtained
by inserting $v$ into $G_2$. We thus insert $v$ into $T(G_2)$ and we
are done.

\item
If $v'$ is a boundary vertex, then
since $v'$ was matched in $M(G_1)$, according to the LB algorithm,
$v'$ with the edge interval $[begin(v',G),n-1]$ is in $G_2$.
After the insertion of $v$ into $G_1$, $v'$ is not matched in $M(G'_1)$.
Thus, according to the LB algorithm, $G_2'$ can be obtained by
deleting $v'$ (with the edge interval $[begin(v',G),n-1]$) from $G_2$, inserting
$v'$ with the edge interval $[0,end(v',G)]$ into $G_2$, and finally inserting $v$ into $G_2$.

In summary, for this subcase, we delete $v'$ (with the edge interval
$[begin(v',G),n-1]$) from $T(G_2)$ and insert
$v'$ with the edge interval $[0,end(v',G)]$ into $T(G_2)$.
Finally, we insert $v$ into $T(G_2)$, and we are done.
\end{itemize}
\end{itemize}

\item
If $v$ is a boundary vertex (i.e., $begin(v,G)> end(v,G)$),
then according to the LB algorithm, $G_1'$ can be obtained by
inserting $v$ with the edge interval $[begin(v,G),n-1+end(v,G)]$ into $G_1$.
Thus we insert $v$ with the edge interval $[begin(v,G),n-1+end(v,G)]$ into $T(G_1)$.
Depending on whether there is a replacement, there are two
cases.

\begin{itemize}
\item
If no replacement, then depending on whether $v$ is matched in $M(G_1')$,
there are two subcases.

\begin{itemize}
\item
If $v$ is matched, then according to the LB algorithm, $G_2'$ can be
obtained by inserting $v$ with the edge interval $[begin(v,G),n-1]$ into $G_2$. Thus,
we insert $v$ with the edge interval $[begin(v,G),n-1]$ into $T(G_2)$, and we are
done.

\item
If $v$ is not matched, then according to the LB algorithm, $G_2'$ can be
obtained by inserting $v$ with the edge interval $[0, end(v,G)]$ into $G_2$.
We thus insert $v$ with the edge interval $[0, end(v,G)]$ into $T(G_2)$, and we are done.

\end{itemize}

\item
Otherwise, there is a replacement $v'$. So $v'$
was matched in $M(G_1)$ but is not matched in $M(G_1')$, and $v$ is
matched in $M(G_1')$. Depending on whether $v'$ is a boundary vertex, there are two
subcases.

\begin{itemize}
\item
If $v'$ is a non-boundary vertex, then since $v$ is
matched in $M(G_1')$, $G_2'$ can be
obtained by inserting $v$ with the edge interval $[begin(v,G),n-1]$ into $G_2$.
We thus insert $v$ with the edge interval $[begin(v,G),n-1]$ into $T(G_2)$.

\item
If $v'$ is a boundary vertex, then according to the LB algorithm,
$G_2'$ is the graph obtained by deleting $v'$ (with the edge interval
$[begin(v',G),n-1]$) from $G_2$, inserting $v'$ with the edge interval $[0,end(v',G)]$ into
$G_2$, and finally inserting $v$ with the edge interval $[begin(v,G),n-1]$ into $G_2$.

Thus, we delete $v'$ (with the edge interval $[begin(v',G),n-1]$) from $T(G_2)$, and insert
$v'$ with the edge interval $[0,end(v',G)]$ into $T(G_2)$.
Finally, we insert $v$ with the edge interval $[begin(v,G),n-1]$ into $T(G_2)$.
\end{itemize}

\end{itemize}
\end{itemize}

This completes the description of our procedure for handling a vertex insertion.

Next, consider a vertex deletion, i.e., deleting a vertex $v$ from $V_1$ of
$G$. Our procedure for this operation proceeds in a manner symmetric to the insertion
procedure, and we briefly discuss it below. Define the two graphs $G_1'$ and $G_2'$ similarly as above.

\begin{itemize}
\item If $v$ is a non-boundary vertex, then we delete $v$ from
$T(G_1)$. If no supplement, then we delete $v$ from $T(G_2)$ and we
are done. Otherwise, let $v'$ be the supplement. So $v'$ was not
matched in $M(G_1)$ but is matched in $M(G_1')$.
Depending on whether $v'$ is a boundary vertex, there are two cases.

\begin{itemize}
\item If $v'$ is a non-boundary vertex, then we delete $v$ from
$T(G_2)$ and we are done.

\item If $v'$ is a boundary vertex, then we delete $v'$ with the edge interval
$[0,end(v',G)]$ from $T(G_2)$ and insert $v'$ with the edge interval
$[begin(v',G),n-1]$ into
$T(G_2)$. Finally, delete $v$ from $T(G_2)$, and we are done.

\end{itemize}

\item If $v$ is a boundary vertex, then we delete $v$ (with the edge interval
$[begin(v,G),n-1+end(v,G)]$) from $T(G_1)$. Depending on whether there is
a supplement, there are two cases.

\begin{itemize}
\item
If no supplement, then depending on whether $v$ was matched in
$M(G_1)$, there are two subcases. If $v$ was matched, then we
delete $v$ (with the edge interval $[begin(v,G),n-1]$) from $T(G_2)$;
otherwise, we delete $v$ (with the edge interval $[0,end(v,G)]$) from $T(G_2)$.

\item Otherwise, let $v'$ be the supplement.
So $v'$ was not matched in $M(G_1)$ but is matched in $M(G_1')$, and
$v$ was matched in $M(G_1)$.
Since $v$ was matched in $M(G_1)$, according to the LB algorithm,
$G_2$ contains $v$ with the edge interval $[begin(v,G),n-1]$.
If $v'$ is a non-boundary vertex, then we delete $v$ (with the edge interval
$[begin(v,G),n-1]$) from $T(G_2)$ and we are done.
Otherwise, since $v'$ was not matched in $M(G_1)$, according to the LB
algorithm, $G_2$ contains $v'$ with the edge interval $[0,end(v',G)]$;
since $v'$ is matched in $M(G_1')$, according to the LB
algorithm, $G_2'$ should contain $v'$ with the edge interval
$[begin(v',G),n-1]$. Therefore,
we delete $v'$ (with the edge interval $[0,end(v',G)]$) from $T(G_2)$,
insert $v'$ with the edge interval $[begin(v',G),n-1]$ into $T(G_2)$, and finally
delete $v$ (with the edge interval $[begin(v,G),n-1]$) from $T(G_2)$.
\end{itemize}
\end{itemize}

This completes the description of our vertex deletion procedure.

As shown in Subsection \ref{subsec-model}, the decision version of the min-max problem
can be transformed to the problem of dynamically maintaining the maximum matching
in a circular convex bipartite graph subject to a sequence of vertex insertions and deletions.
Hence, the correctness of our algorithm for the decision version hinges on the correctness
of our dynamic maximum matching algorithm for circular convex bipartite graphs.
Yet, the correctness of our (online) dynamic maximum matching algorithm for circular
convex bipartite graphs can be seen quite easily. This is because our procedures
for performing vertex insertions and deletions are both based on the fact that
they simply mimic the behavior of the LB algorithm (while
implementing their processing by the means of the BGHK data structures).

For the running time of our algorithm, each update operation involves
at most two vertex insertions and two vertex deletions on $T(G_1)$ and
$T(G_2)$, each of which takes $O(\log^2n)$ amortized time
\cite{ref:BrodalDy07}; thus, it takes $O(\log^2n)$ amortized time in total.
Actually, the BGHK data structure in \cite{ref:BrodalDy07} supports
vertex insertions and deletions not only on $V_1$ but also on $V_2$.
Inserting vertices on $V_2$ may make the tree unbalanced, and that is why its
running time is amortized. However, if vertices are inserted only on
$V_1$, then the tree will never become unbalanced and thus each update takes $O(\log^2n)$
time in the worst case. In our
problem formulation, the vertex updates indeed are only on $V_1$.
Denote by $M(G)$ the maximum matching in $G$. We then have the following result.

\begin{Theo}\label{theo:10}
A data structure on a circular convex bipartite graph
$G=(V_1,V_2,E)$ can be built in $O(n\log^2 n)$ time for
maintaining its maximum matching $M(G)$ so that each online
vertex insertion or deletion on $V_1$ can be done in $O(\log^2 n)$ time in the
worst case. After each update operation, $|M(G)|$ can be reported in
$O(1)$ time and $M(G)$ can be reported in $O(|M(G)|)$ time.
\end{Theo}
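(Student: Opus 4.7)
The plan is to assemble the theorem from the three ingredients already in hand: the LB decomposition of a circular convex bipartite graph into two convex bipartite graphs $G_1$ and $G_2$, the BGHK dynamic data structure for convex bipartite graphs, and the case analysis just described for vertex insertions and deletions. For the preprocessing step, I will first invoke the LB algorithm on $G$ in $O(n)$ time to construct $G_1$ and its maximum matching $M(G_1)$, then build $G_2$ from $M(G_1)$ and $G$, and finally build the two BGHK trees $T(G_1)$ and $T(G_2)$; since each BGHK tree can be built in $O(n\log^2 n)$ time, the total preprocessing cost is $O(n\log^2 n)$, as claimed.

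For correctness, I plan to argue structurally rather than combinatorially. The LB algorithm guarantees that a maximum matching of $G$ is read off directly from $M(G_2)$, so it suffices to maintain the invariant that, after each update on $V_1$ of $G$, the graphs stored in $T(G_1)$ and $T(G_2)$ are exactly the graphs $G_1'$ and $G_2'$ that the LB algorithm would produce from the updated $G$. The case analysis for vertex insertion (split by whether the inserted $v$ is boundary/non-boundary, whether a replacement occurs in $T(G_1)$, and whether that replacement is boundary/non-boundary), and the symmetric case analysis for deletion (split on boundary/non-boundary, presence of supplement, and type of the supplement), together cover every way in which the LB recipe for building $G_2$ from $G$ and $M(G_1)$ can be perturbed by a single update. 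In each branch I would verify that the finite sequence of BGHK operations performed on $T(G_2)$ transforms it into precisely the graph that the LB rule prescribes for the new $G_1'$; then $M(G_2')$, as maintained by the BGHK data structure, is the required maximum matching of the updated $G$.

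For the running time of each update, I observe that every branch triggers at most two vertex insertions and two vertex deletions on $T(G_1)$ and $T(G_2)$ combined, plus a constant number of BGHK status queries (each $O(1)$). The BGHK result in \cite{ref:BrodalDy07} would a priori only give $O(\log^2 n)$ amortized per operation, so the subtle point, and what I consider the main obstacle, is converting this to worst case. The key observation is that the amortization in \cite{ref:BrodalDy07} stems only from rebalancing caused by insertions into $V_2$; since the circular convex reduction via LB only modifies the $V_1$ side of $G_1$ and $G_2$ (the ordered side $V_2$ is fixed throughout), the underlying tree in each BGHK structure is built once at preprocessing and never rebalanced during updates, so each BGHK update runs in worst-case $O(\log^2 n)$. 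Thus the per-update cost is $O(\log^2 n)$ in the worst case.

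Finally, for reporting, $|M(G)|=|M(G_2)|$ is stored at the root of $T(G_2)$ and retrieved in $O(1)$, while $M(G)$ is enumerated by traversing the matching maintained inside $T(G_2)$ in $O(|M(G)|)$ time, both inherited directly from the BGHK interface. Combining preprocessing, correctness of the mimicked LB behavior, the worst-case upgrade of the BGHK bound under $V_1$-only updates, and the reporting guarantees of BGHK yields the stated theorem.
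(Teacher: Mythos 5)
Your proposal matches the paper's own argument essentially step for step: the same preprocessing via the LB decomposition into $G_1$ and $G_2$ with two BGHK structures, the same case analysis mimicking the LB algorithm for insertions and deletions, and the same key observation that the amortization in the BGHK bound arises only from updates on $V_2$, so that $V_1$-only updates run in worst-case $O(\log^2 n)$ time. No gaps; this is the paper's proof.
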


Since the decision version of the min-max problem
has been reduced to dynamically maintaining the maximum matching
in a circular convex bipartite graph
under a sequence of $2n$ vertex insertions and $2n$ vertex decisions,
we solve the dynamic maximum problem as follows.  After each update
operation, we check whether $|M(G)|=n$, and if this is true, then we report
$\lambda_C\leq \lambda$ and halt the algorithm. If all $4n$ updates have
been processed but it is always $|M(G)|<n$, then we report
$\lambda_C>\lambda$.  Based on Theorem \ref{theo:10}, we have the result below.

\begin{Theo}\label{theo:20}
Given a value $\lambda$, we can determine whether
$\lambda_C\leq\lambda$ in $O(n\log^2 n)$ time for the decision version
of the min-max problem.
\end{Theo}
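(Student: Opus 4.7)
The plan is to instantiate the reduction of Subsection \ref{subsec-model} on top of the dynamic matching data structure from Theorem \ref{theo:10}. I would first fix an arbitrary regular $n$-gon $P$ whose vertices lie on $\partial C$ and build the circular convex bipartite graph $G_P$ in which $A_i$ is adjacent to $P_j$ iff $|A_iP_j|\leq\lambda$. By Lemma \ref{lem:cir-conv} this graph is indeed circular convex on $\{P_0,\ldots,P_{n-1}\}$, and after $O(n\log n)$ geometric preprocessing (essentially sorting angular events around $\partial C$) the implicit description $[begin(A_i,G_P), end(A_i,G_P)]$ can be produced in $O(n)$ time. I then invoke Theorem \ref{theo:10} to build the preprocessing data structure for $G_P$ in $O(n\log^2 n)$ time and immediately check whether $|M(G_P)|=n$ at the starting orientation.

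Next I would simulate the clockwise rotation of $P$ by arc distance exactly $2\pi/n$. As argued in Subsection \ref{subsec-model}, during this rotation $G_P$ undergoes at most $2n$ edge additions and $2n$ edge removals, one of each per sensor (ignoring the trivial sensors whose edge set is either empty or all of $V_2$), at time moments that can be computed from the two intersections of $\partial C$ with the circle of radius $\lambda$ centered at $A_i$ and then sorted in $O(n\log n)$ time. Each addition or removal on $E(A_i)$ is implemented, as described in Subsection \ref{subsec-model}, as one vertex deletion of $A_i$ followed by one vertex reinsertion of $A_i$ with the updated edge interval. Thus the rotation is encoded as a sequence of at most $4n$ vertex updates on the $V_1$ side of $G_P$.

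I would feed these $4n$ updates in sorted order to the data structure of Theorem \ref{theo:10}, and after each update read off $|M(G_P)|$ in $O(1)$ time. The instant $|M(G_P)|=n$ I halt and report $\lambda_C\leq\lambda$; if all $4n$ updates complete without ever reaching a perfect matching, I report $\lambda_C>\lambda$. Correctness is immediate from the feasible-moment observation of Subsection \ref{subsec-model}: $G_P$ changes only at these $4n$ event times, so testing all of them is equivalent to testing feasibility over the entire rotation, which is in turn equivalent to $\lambda_C\leq\lambda$. For the running time, preprocessing costs $O(n\log n)$ (geometry) plus $O(n\log^2 n)$ (Theorem \ref{theo:10}), and each of the $4n$ updates costs $O(\log^2 n)$, giving the claimed $O(n\log^2 n)$ total.

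The only subtle point I expect to have to flag, rather than a genuine obstacle, is that Theorem \ref{theo:10} supplies worst-case (not merely amortized) $O(\log^2 n)$ per update because the vertex set $V_2=\{P_0,\ldots,P_{n-1}\}$ is fixed throughout the rotation and all updates land on $V_1$; this is precisely the regime in which the BGHK tree remains balanced, so no amortization is needed and the $O(n\log^2 n)$ bound holds in the worst case.
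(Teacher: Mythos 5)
Your proposal is correct and follows essentially the same route as the paper: it instantiates the reduction of Subsection \ref{subsec-model} (rotation by $2\pi/n$ yielding $2n$ additions and $2n$ removals, each realized as a vertex deletion plus reinsertion on $V_1$), feeds the resulting $4n$ updates to the data structure of Theorem \ref{theo:10}, and checks $|M(G_P)|=n$ after each update. The remark about the worst-case (rather than amortized) bound when all updates are confined to $V_1$ matches the paper's own justification as well.
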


\section{The Optimization Version of the Min-max Problem}
\label{sec:optimization}

In this section, we consider the optimization version of the min-max
problem, and present an $O(n\log^3 n)$ time algorithm for it.
The main task is to compute the value $\lambda_C$.

Let $o$ be the center of $C$. For simplicity of discussion, we assume that no sensor
lies at $o$.
Denote by $X_{i}$ and $Y_i$ the two points on $\partial C$ which are closest
and farthest to each sensor $A_{i}$, respectively.
Clearly, $X_{i}$ and $Y_i$ are the two intersection
points of $\partial C$ with the line passing through $A_{i}$ and the center $o$
of $C$ (see Figure 1(a)). The lemma below has been proved in
\cite{ref:TanNe10}, and for self-containment of this paper, we include that proof in
Appendix \ref{app:lemmaproofs}.

\begin{figure}[t]
\begin{center}
\includegraphics[height=1.3in,clip]{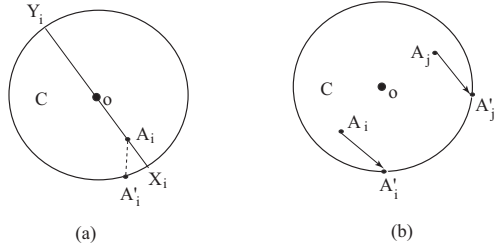}
\caption{(a) The points $X_i$ and $Y_i$ on $\partial C$ for $A_i$;
(b) $|A_{i} A'_{i}| = |A_{j} A'_{j}|$.}
\end{center}
\vspace*{-0.2in}
\end{figure}

\begin{Lem}\label{lem:10}\cite{ref:TanNe10}
Suppose an optimal solution for the min-max optimization problem is achieved
with $\lambda_C = |A_{i} A'_{i}|$ for some $i\in \{0,\ldots,
n-1\}$. Then either $A'_{i}$ is the point $X_{i}$, or there is
another sensor $A_{j}$ ($j \neq i$) such that $\lambda_C = |A_{j}
A'_{j}|$ also holds. In the latter case, any slight rotation of the
regular $n$-gon that achieves $\lambda_C$ in either direction
causes the value of $\lambda_C$ to increase (i.e., it makes one of the two
distances $|A_i A'_i|$ and $|A_j A'_j|$ increase and the other one decrease).
\end{Lem}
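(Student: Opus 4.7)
The plan is to analyze how the distances $|A_k A'_k|$ change as we rigidly rotate the optimal regular $n$-gon $P$ about the center $o$ of $C$ by a small signed angle $\theta$. Write $A'_k(\theta)$ for the rotated position of the vertex assigned to $A_k$ and set $f_k(\theta) := |A_k A'_k(\theta)|$. Since $A'_k(\theta)$ moves along $\partial C$ as $\theta$ varies, $f_k$ is a smooth function of $\theta$; its unique minimum along $\partial C$ is at $X_k$ and its unique maximum is at $Y_k$. In any optimum $f_k(0) \leq \lambda_C < 2 = |A_k Y_k|$, so $A'_k(0) \neq Y_k$, and hence $f'_k(0) = 0$ if and only if $A'_k(0) = X_k$. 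In particular, whenever $A'_k(0) \neq X_k$, one rotation direction strictly decreases $f_k$ and the other strictly increases it.

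For the either/or claim I argue by contradiction: assume $A'_i(0) \neq X_i$ and that $f_j(0) < \lambda_C$ strictly for every $j \neq i$. Since $f'_i(0) \neq 0$, a sufficiently small rotation in the direction that strictly decreases $f_i$ keeps $f_i(\theta) < \lambda_C$; by continuity, $f_j(\theta) < \lambda_C$ for every $j \neq i$ when $|\theta|$ is small enough. Then $\max_k f_k(\theta) < \lambda_C$, contradicting the optimality of $\lambda_C$. Hence some $j \neq i$ must satisfy $f_j(0) = \lambda_C$.

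For the see-saw assertion in the latter case, fix the sign of $\theta$ so that $f_i$ strictly decreases as $\theta$ grows from $0$; this is possible because $A'_i(0) \neq X_i$ gives $f'_i(0) \neq 0$. Optimality forces $\max_k f_k(\theta) \geq \lambda_C$ for all small $|\theta|$, and by continuity this maximum must be witnessed in each direction by some sensor $A_k$ with $f_k(0) = \lambda_C$ and $f_k$ non-decreasing at $0$ in that direction. Choosing $A_j$ to be such a non-$i$ witness in the $+\theta$ direction (preferring one with $A'_j(0) \neq X_j$ so that $f'_j(0) > 0$), we obtain $f_i(\theta) < \lambda_C < f_j(\theta)$ for small $\theta > 0$, and the symmetric statement for $\theta < 0$, which gives the claimed strict increase of $\lambda_C$ in either direction via the one-up-one-down mechanism.

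The main obstacle I expect is making the see-saw description literally accurate in degenerate configurations where every non-$i$ sensor at distance $\lambda_C$ happens to satisfy $A'_k = X_k$: then $f_k$ sits at a local minimum and rises in both rotation directions without ever strictly decreasing, so one cannot literally exhibit one of the two distances going down. Even so, $\lambda_C$ does strictly increase on both sides in this case, which suffices for the main conclusion of the lemma; the cleaner see-saw description is obtained by selecting a non-degenerate witness $A_j$, whose existence in at least one direction is what makes the lemma useful in the subsequent parametric-search argument.
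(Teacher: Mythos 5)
Your proof takes essentially the same route as the paper's: perturb the optimal regular $n$-gon by a small rotation and contradict optimality, using the monotonicity of each distance $f_k(\theta)=|A_kA'_k(\theta)|$ away from $X_k$ and $Y_k$. One small slip: $|A_kY_k|=1+|oA_k|$ equals $2$ only for sensors on $\partial C$, and $\lambda_C<2$ is not established, so you cannot rule out $A'_k(0)=Y_k$ this way; this is harmless, though, since at $Y_k$ the function $f_k$ is at its maximum and strictly decreases in both rotation directions, which only strengthens the contradiction in the either/or part and the choice of a decreasing direction in the see-saw part. The degenerate case you flag (every witness sitting at its minimum $X_k$, so that in one direction both named distances increase) is genuine and the paper's own proof silently passes over it; as you observe, $\lambda_C$ still strictly increases in both directions there, and the algorithm avoids this configuration anyway by first disposing of the case $\lambda_C\le \max_{h}|A_hX_h|$.
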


The points on $\partial C$ satisfying the conditions specified in
Lemma \ref{lem:10}
may be considered as those defining candidate values for $\lambda_C$,
i.e., they can be considered as some vertices of possible regular
$n$-gons on $\partial C$ in an optimal solution. The points $X_{h}$
of all sensors $A_h$ ($0 \leq h \leq n-1$) can be easily determined.
Define $D_1=\cup_{h=0}^{n-1}\{|A_hX_h|\}$, which can be computed in
$O(n)$ time. But, the challenging task is to handle all the pairs
$(A_i, A_j)$ ($i \neq j$) such that the distance from $A_i$ to a
vertex of a regular $n$-gon is equal to the distance from $A_j$ to
another vertex of that $n$-gon and a slight rotation of the $n$-gon
in either direction monotonically increases one of these two
distances but decreases the other. We refer to such distances as the
{\it critical equal distances}. Denote by $D_2$ the set of all critical equal
distances. Let $D=D_1\cup D_2$. By Lemma \ref{lem:10}, $\lambda_C\in
D$. Thus, if $D$ is somehow available, then $\lambda_C$ can be
determined by using our algorithm in Theorem \ref{theo:20} in a binary search process. Since
$D_1$ is readily available, the key is to deal with $D_2$
efficiently. An easy observation is $\max_{0\leq h\leq n-1}|A_hX_h|\leq \lambda_C$.
We can use the algorithm in Theorem \ref{theo:20} to check whether
$\lambda_C\leq \max_{0\leq h\leq n-1}|A_hX_h|$, after which we know
whether $\lambda_C=\max_{0\leq h\leq n-1}|A_hX_h|$. Below, we assume
$\max_{0\leq h\leq n-1}|A_hX_h|< \lambda_C$ (otherwise, we are
done). Thus, we only need to focus on finding $\lambda_C$ from the set $D_2$.


It has been shown in \cite{ref:TanNe10} that $|D_2|=O(n^3)$.
Of course, our goal is to avoid an $O(n^3)$ time solution.
To do so, first we determine a subset $D_2'$ of $D_2$ such that
$\lambda_C\in D_2'$ but with $|D_2'|=O(n^2)$.
Furthermore, we do not compute $D_2'$ explicitly. Specifically, our
idea is as follows. We show that the elements of $D_2'$ are the
$y$-coordinates of a subset of intersection points among a set $F$ of $O(n)$
functional curves in the plane such that each curve is $x$-monotone and any two
such curves intersect in at most one point at which the two curves cross
each other. (Such a set of curves is sometimes referred to as
{\em pseudolines} in the literature.) Let $\calA_F$ be the arrangement
of $F$ and $|\calA_F|$ be the number of vertices of $\calA_F$. Without
computing $\calA_F$ explicitly, we will generalize the techniques in
\cite{ref:ColeAn89} to compute the $k$-th highest vertex of
$\calA_F$ for any integer $k$ with $1\leq k\leq |\calA_F|$ in
$O(n\log^2 n)$ time. Consequently, with Theorem \ref{theo:20}, the
value $\lambda_C$ can be computed in $O(n\log^3 n)$ time. The
details are given below.




Let $P$ be an arbitrary regular $n$-gon with its vertices $P_0$,
$P_1$, $\ldots$, $P_{n-1}$ clockwise on $\partial C$. Suppose the
distances of all the pairs between a sensor and a vertex of $P$ are $d_1 \leq
d_2 \leq \cdots \leq d_{n^{2}}$ in sorted order. Let $d_0=0$.
Clearly, $d_0<\lambda_C\leq d_{n^2}$ (the case of $\lambda_C=0$ is
trivial). Hence, there exists an integer $k$ with $0\leq k<n^2$ such
that $\lambda_C\in (d_k,d_{k+1}]$. One can find $d_k$ and $d_{k+1}$
by first computing all these $n^2$ distances explicitly and then utilizing
our algorithm in Theorem \ref{theo:20} in a binary search process.
But that would take $\Omega(n^2)$
time. In the following lemma, we give a faster procedure without
having to compute these $n^2$ distances explicitly.

\begin{Lem}\label{lem:new20}
The two distances $d_k$ and $d_{k+1}$
can be obtained in $O(n\log^3 n)$ time.
\end{Lem}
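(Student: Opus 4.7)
The plan is to binary-search on the index $k$ using the decision algorithm of Theorem~\ref{theo:20} as an oracle, while never materializing the $n^2$ distances. Since each oracle call costs $O(n\log^2 n)$, I can afford $O(\log n)$ calls, and the remaining work per iteration---implicit rank selection over the multiset of $n^2$ distances---must fit in $O(n\log^2 n)$ time.

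First, I would exploit the regular spacing of $P_0,\ldots,P_{n-1}$ on $\partial C$ to obtain implicit sorted access to the distances of each sensor. Fix a sensor $A_i$ and view the function $j\mapsto|A_iP_j|$ as $j$ moves clockwise: it is strictly bitonic, with a unique minimum at the vertex $P_{m_i}$ closest to $A_i$ and a unique maximum at the diametrically opposite vertex. The index $m_i$ is obtained by one binary search on the angular coordinate of $A_i$, so $O(n\log n)$ preprocessing yields, for every sensor, the decomposition of its $n$ distances into two monotonically increasing subsequences (one for the arc from $P_{m_i}$ clockwise to the antipode and one for the arc traversed counterclockwise). In total, the $n^2$ distances are represented implicitly by $2n$ sorted arrays of total size $n^2$, with any entry accessible in $O(1)$ time.

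Next, I would binary-search for $k$. Maintain an interval $[\underline{k},\overline{k}]$, initialized to $[0,n^2]$. In each iteration, let $k'=\lfloor(\underline{k}+\overline{k})/2\rfloor$, compute the $k'$-th smallest distance $d_{k'}$ by a selection routine over the $2n$ sorted arrays, query the oracle of Theorem~\ref{theo:20} with $\lambda=d_{k'}$, and set $\overline{k}\leftarrow k'$ if the answer is ``yes'' or $\underline{k}\leftarrow k'$ otherwise. After $O(\log n^2)=O(\log n)$ iterations, $\overline{k}-\underline{k}=1$; two final selection calls return $d_k=d_{\underline{k}}$ and $d_{k+1}=d_{\overline{k}}$.

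The central technical obstacle is selecting $d_{k'}$ from $2n$ sorted arrays of total size $n^2$ without unrolling them. I would invoke Frederickson's classical selection algorithm for sorted sequences (or, alternatively, a binary search on value combined with the observation that for any candidate value $v$ the number of elements in each monotone subsequence that are at most $v$ can be found in $O(\log n)$ time), producing $d_{k'}$ in $O(n\log n)$ time per call. Putting it all together, the total cost is $O(\log n)$ outer iterations, each spending $O(n\log^2 n)$ on the oracle and $O(n\log n)$ on the selection, which adds up to $O(n\log^3 n)$ as claimed.
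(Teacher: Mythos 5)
Your proposal is correct and follows essentially the same route as the paper: both exploit the bitonicity of $j\mapsto|A_iP_j|$ to represent the $n^2$ distances as $2n$ implicitly sorted arrays (computable in $O(n\log n)$ time), and both then locate $d_k,d_{k+1}$ with $O(\log n)$ calls to the decision algorithm of Theorem~\ref{theo:20}. The only difference is bookkeeping: the paper invokes the ``binary search in sorted arrays'' technique of \cite{ref:ChenRe11} as a black box with the decision algorithm as the oracle, whereas you make the search explicit by binary-searching on the rank and using Frederickson--Johnson selection over the sorted arrays; both yield the same $O(n\log^3 n)$ bound.
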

\begin{proof}
We apply a technique, called {\em binary search in sorted arrays}
\cite{ref:ChenRe11},
as follows. Given $M$ arrays $A_i$, $1\leq i\leq M$, each containing
$O(N)$ elements in sorted order, the task is to find a certain element $\delta \in
A=\cup_{i=1}^M A_i$. Further, assume that there is a ``black-box"
decision procedure $\Pi$ available, such that given any value $a$,
$\Pi$ reports $a\leq \delta$ or $a>\delta$ in $O(T)$ time. An
algorithm is given in \cite{ref:ChenRe11} to find the sought element
$\delta$ in $A=\cup_{i=1}^MA_i$ in $O((M+T)\log (NM))$
time. We use this technique to find $d_k$ and $d_{k+1}$, as follows.

Consider a sensor $A_i$. Let $S(A_i)$ be the set of distances
between $A_i$ and all vertices of $P$. In $O(\log n)$ time, we can
implicitly partition $S(A_i)$ into two sorted arrays in the
following way. By binary search, we can determine an index $j$ such
that $X_i$ lies on the arc of $\partial C$ from $P_j$ to $P_{j+1}$
clockwise (the indices are taken as module by $n$). Recall that
$X_i$ is the point on $\partial C$ closest to $A_i$. If a vertex of
$P$ is on $X_i$, then define $j$ to be the index of that vertex.
Similarly, we can determine an index $h$ such that $Y_i$ (i.e., the
farthest point on $\partial C$ to $A_i$) lies on the arc from $P_h$
to $P_{h+1}$ clockwise. If a vertex of $P$ is on $Y_i$, then define
$h$ to be the index of that vertex. Both $j$ and $h$ can be
determined in $O(\log n)$ time, after which we implicitly
partition $S(A_i)$ into two sorted arrays: One array consists of all
distances from $A_i$ to $P_j,P_{j-1},\ldots,P_{h+1}$, and the other
consists of all distances from $A_i$ to
$P_{j+1},P_{j+2},\ldots,P_{h}$ (again, all indices are taken as
module by $n$). Note that both these arrays are sorted increasingly and
each element in them can be obtained in $O(1)$ time by using its index in
the corresponding array.

Thus, we obtain $2n$ sorted arrays (represented implicitly) for all $n$ sensors in
$O(n\log n)$ time, and each array has no more than $n$ elements.
Therefore, by using the technique of binary search in sorted arrays, with
our algorithm in Theorem \ref{theo:20} as the black-box decision
procedure, both $d_k$ and $d_{k+1}$ can be found in $O(n\log^3 n)$
time. The lemma thus follows.
\end{proof}

By applying Lemma \ref{lem:new20}, we have $\lambda_C\in (d_k,d_{k+1}]$.
Below, for simplicity of discussion, we assume $\lambda_C\neq d_{k+1}$.
Thus $\lambda_C\in(d_k,d_{k+1})$. Since
$\max_{0\leq h\leq n-1}|A_hX_h|< \lambda_C$,
we redefine $d_k:=\max\{d_k,\max_{0\leq h\leq n-1}|A_hX_h|\}$.
We still have $\lambda_C\in(d_k,d_{k+1})$.
Let $D_2'$ be the set of all
critical equal distances in the range $(d_k,d_{k+1})$. Then $\lambda_C\in D_2'$. We
show below that $|D_2'|=O(n^2)$ and $\lambda_C$ can be found in
$O(n\log^3 n)$ time without computing $D_2'$ explicitly.


Suppose we rotate the regular $n$-gon $P=(P_0,P_1,\ldots,P_{n-1})$ on
$\partial C$ clockwise by an arc distance $2\pi/n$ (this is the arc
distance between any two adjacent vertices of $P$). Let
$A_i(P_h(t))$ denote the distance function from a sensor $A_i$ to a
vertex $P_h$ of $P$ with the time parameter $t$ during
the rotation. Clearly, the function $A_i(P_h(t))$ increases or
decreases monotonically, unless the interval of $\partial C$ in
which $P_h$ moves contains the point $X_i$ or $Y_i$; if that interval
contains $X_i$ or $Y_i$, then we can further divide the interval
into two sub-intervals at $X_i$ or $Y_i$, such that
$A_i(P_h(t))$ is monotone in each sub-interval.
The functions $A_i(P_h(t))$, for all $P_h$'s of $P$, can thus be
put into two sets $S_{i1}$ and $S_{i2}$ such that all functions
in $S_{i1}$ monotonically increase and all functions in $S_{i2}$
monotonically decrease. Let $m=|S_{i1}|$. Then $m\leq n$. Denote by
$d^{i}_1 < d^{i}_2 < \cdots <d^{i}_m$ the sorted sequence of the
initial values of the functions in $S_{i1}$. Also, let $d^{i}_0=0$
and $d^{i}_{m+1}=2$ (recall that the radius of $C$ is $1$).  It is
easy to see that the range $(d_k, d_{k+1})$ obtained in Lemma \ref{lem:new20}
is contained in $[d^i_j, d^i_{j+1}]$ for some $0\leq j \leq m$.
The same discussion can be made for the distance functions in the
set $S_{i2}$ as well.

Since we rotate $P$ by only an arc distance $2\pi/n$, during the
rotation of $P$, each sensor $A_i$ can have at most two distance
functions (i.e., one decreasing and one increasing) whose values may
vary in the range $(d_k, d_{k+1})$. We can easily identify these at
most $2n$ distance functions for the $n$ sensors in $O(n\log n)$
time. Denote by $F'$ the set of all such distance functions.
Clearly, all critical equal distances in the range $(d_k,d_{k+1})$ can be
generated by the functions in $F'$ during the rotation of $P$. Because every
such distance function either increases or decreases monotonically
during the rotation of $P$, each pair of one increasing function and
one decreasing function can generate at most one critical equal distance
during the rotation. (Note that by Lemma \ref{lem:10}, a
critical equal distance cannot be generated by two increasing functions or
two decreasing functions.) Since $|F'|\leq 2n$, the total number of
critical equal distances in $(d_k,d_{k+1})$ is bounded by $O(n^2)$, i.e.,
$|D_2'|=O(n^2)$. For convenience of discussion, since we are
concerned only with the critical equal distances in $(d_k,d_{k+1})$, for each
function in $F'$, we restrict it to the range $(d_k,d_{k+1})$ only.

Let the time $t$ be the $x$-coordinate and the function values be
the $y$-coordinates of the plane. Then each function in $F'$ defines a curve
segment that lies in the strip of the plane between the two horizontal lines
$y=d_k$ and $y=d_{k+1}$. We refer to a function in
$F'$ and its curve segment interchangeably, i.e., $F'$ is
also a set of curve segments. Clearly, a critical equal distance generated by an
increasing function and a decreasing function is the $y$-coordinate of
the intersection point of the two corresponding curve segments.
Note that every function in $F'$
has a simple mathematical description. Below, we simply assume that
each function in $F'$ is of $O(1)$ complexity. Thus, many
operations on them can each be performed in $O(1)$ time, e.g., computing the
intersection of a decreasing function and an increasing function.

The set $D_2'$ can be computed explicitly in $O(n^2)$ time, after
which $\lambda_C$ can be easily found by binary search. Below, we develop a faster
solution without computing $D_2'$ explicitly, by utilizing
the property that each element of $D_2'$ is the $y$-coordinate of the
intersection point of a decreasing function and an increasing function in
$F'$ and generalizing the techniques in \cite{ref:ColeAn89}.

A slope selection algorithm for a set of points in the plane was given in
\cite{ref:ColeAn89}. We will extend this approach to solve our problem.
The following lemma is needed.

\begin{Lem}\label{lem:30}
For any two increasing (resp., decreasing) functions in $F'$, if the
curve segments defined by them are not identical to each other, then the two curve
segments intersect in at most one point and they cross each other at
their intersection point (if any).
\end{Lem}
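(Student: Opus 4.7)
The plan is to reduce both distance functions to distances from a single moving point on $\partial C$ to two fixed points in $C$, and then analyze their intersections using the perpendicular bisector of the segment joining those two points.

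Since (as noted in the paragraph just above the lemma) $F'$ contains at most one increasing function per sensor, any two distinct increasing functions in $F'$ must come from two different sensors $A_i\neq A_j$, with associated vertices $P_h, P_{h'}$ of the rotating $n$-gon. Writing $f(t)=|A_i P_h(t)|$ and $g(t)=|A_j P_{h'}(t)|$, I note that throughout the rotation $P_{h'}(t)$ is a fixed rotation $R_c$ of $P_h(t)$ (with $c=\theta_{h'}-\theta_h$), so if I set $A_j'=R_{-c}(A_j)$ then $g(t)=|A_j' P_h(t)|$. Both $f,g$ are now distances from the single moving point $P_h(t)\in\partial C$ to two fixed points $A_i,A_j'\in C$, and the hypothesis $f\not\equiv g$ forces $A_i\neq A_j'$. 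Consequently $f(t)=g(t)$ iff $P_h(t)$ lies on the perpendicular bisector $L$ of $A_iA_j'$, and since $L$ is a proper line, $|L\cap\partial C|\le 2$. Letting $Q_1,Q_2$ denote these intersections (when they exist), the two curves meet in at most two points, and each such meeting is a transversal crossing whenever $\partial C$ meets $L$ transversally.

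It remains to show that at most one of $Q_1,Q_2$ lies in the simultaneously increasing region. The function $f$ is increasing in $t$ precisely when $P_h$ lies on the open half-arc $H_i\subset\partial C$ cut off by the diameter $X_iY_i$ (the chord through $o$ and $A_i$); define $H_j'$ analogously for $g$ with respect to $A_j'$. I will show that $Q_1,Q_2$ lie on opposite sides of at least one of the two diameters $X_iY_i$, $X_j'Y_j'$, which immediately bounds $|\{Q_1,Q_2\}\cap(H_i\cap H_j')|\le 1$. Without loss of generality assume $|A_i|\ge|A_j'|$; I claim the diameter $X_iY_i$ separates $Q_1$ from $Q_2$, for which it suffices that its endpoints $X_i,Y_i$ lie strictly on opposite sides of $L$. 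By the triangle inequality,
\[
|X_i-A_j'|\ge|X_i|-|A_j'|=1-|A_j'|\ge 1-|A_i|=|X_i-A_i|,
\]
placing $X_i$ on the $A_i$-side of $L$; symmetrically $|Y_i-A_j'|\le 1+|A_j'|\le 1+|A_i|=|Y_i-A_i|$ places $Y_i$ on the $A_j'$-side.

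The main technical step is verifying both inequality chains above are strict whenever $A_i\neq A_j'$; this follows by a short case analysis of the triangle-inequality equality cases (equality in either chain requires $A_j'$ to be a nonnegative scalar multiple of $A_i$ of matching length, forcing $A_j'=A_i$). Once strictness is established, the segment $X_iY_i$ crosses $L$ at a point strictly inside the open disk, so $Q_1,Q_2$ lie strictly on opposite sides of the diameter $X_iY_i$; hence at most one is in $H_i$ and therefore at most one is in $H_i\cap H_j'$. The decreasing case is entirely symmetric. The hardest part of the proof is recognizing that a single triangle-inequality comparison is the right tool, and that the bound $|A_i|,|A_j'|\le 1$ (sensors lying in the unit disk) is used essentially in the middle step $1-|A_j'|\ge 1-|A_i|$ and its counterpart for $Y_i$.
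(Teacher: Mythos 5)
Your proof is correct, but it takes a genuinely different route from the paper's. The paper splits into two cases according to whether the two distance functions involve the same vertex of the rotating $n$-gon: when $P_a=P_b$ it uses the perpendicular bisector of $\overline{A_iA_j}$ and rules out the second intersection of that bisector with $\partial C$ by invoking the assumption $d_k\geq \max_h|A_hX_h|$ (reaching the second point would force the moving vertex past $X_i$ or $X_j$, dropping the value below $d_k$); when $P_a\neq P_b$ it runs a congruent-triangles argument showing that two coincidences in time would force the two curves to be identical. Your rotation conjugation $A_j'=R_{-c}(A_j)$ collapses the second case into the first, so you only ever deal with one moving point and two fixed points, and the intersection locus becomes the chord $L\cap\partial C$ with its two candidate points $Q_1,Q_2$. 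Your way of discarding one of them is also different: instead of the $d_k$ lower bound, you show by the triangle inequality (using only that the sensors lie in the unit disk and $A_i\neq A_j'$, with strictness forced by the equality cases) that the diameter $X_iY_i$ of the point farther from the center strictly separates $X_i$ from $Y_i$ across $L$, hence separates $Q_1$ from $Q_2$ on the circle, so at most one lies in the common monotonicity semicircle $H_i\cap H_j'$. This makes your proof more self-contained -- it does not rely on the redefinition $d_k:=\max\{d_k,\max_h|A_hX_h|\}$ -- and the crossing claim falls out cleanly since the sign of $f-g$ is determined by which side of $L$ the moving point is on. The one caveat you leave hanging, that $\partial C$ meets $L$ transversally, is in fact automatic (the midpoint of $\overline{A_iA_j'}$ is strictly interior to the disk when $A_i\neq A_j'$, and your own separation argument already exhibits an interior point of $L$), so it is not a gap, though it deserves one explicit sentence.
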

\begin{proof}
We only prove the decreasing case. The increasing case can be proved
similarly. Let $A_i(P_a(t))$ and $A_j(P_b(t))$ be two
decreasing curves in $F'$, where $A_i(P_a(t))$ (resp.,
$A_j(P_b(t))$) is the distance function between the sensor $A_i$
(resp., $A_j$) and the vertex $P_a$ (resp., $P_b$) of the regular $n$-gon
$P$, and the two curve segments defined by $A_i(P_a(t))$ and
$A_j(P_b(t))$ are not the same. Since each sensor has at most one
decreasing function in $F'$, we have $A_i\neq A_j$.
We assume that during the (clockwise) rotation of $P$, $A_i(P_a(t))=A_j(P_b(t))$ at the
moment $t=t_1$ and $t_1$ is the first such moment. Below, we prove
that $A_i(P_a(t))=A_j(P_b(t))$ cannot happen again for any $t>t_1$
in the rotation. There are two cases: $P_a=P_b$ and
$P_a\neq P_b$.

For any two points $p$ and $q$, let $l(p,q)$ denote the line passing
through the two points and $\overline{pq}$ denote the line segment with
endpoints $p$ and $q$ whose length is $|pq|$.
Recall that $o$ is the center of the circle $C$. Let $P_a(t_1)$ and
$P_b(t_1)$ be the positions of $P_a$ and $P_b$ at the moment $t_1$,
respectively.

\begin{itemize}
\item
$P_a=P_b$. Clearly, $P_a(t_1)=P_b(t_1)$.
Let $l$ be the perpendicular bisector of the line
segment $\overline{A_iA_j}$. At the moment $t_1$, since
$|A_iP_a(t_1)|=|A_jP_a(t_1)|$, $P_a(t_1)$ is at one of the two intersection points of
$l$ and $\partial C$. Further, since $A_i(P_a(t))$ is a decreasing
function, $P_a(t_1)$ must be on the right side of the line $l(A_i,o)$ if
we walk from $A_i$ to $o$ (see Fig.~\ref{fig:case10}(a)). Similarly, $P_a(t_1)$
must be on the right side of the line $l(A_j,o)$ (going $A_j$ to $o$). Let $z$ be the
other intersection point of $l$ and $\partial C$. It is easy to see that
$z$ is on the left side of either the line $l(A_i,o)$ or the line
$l(A_j,o)$. Note that $d_k\geq \max_{0\leq h\leq n-1}|A_hX_h|$.
Thus, $d_k\geq |A_iX_i|$ and $d_k\geq |A_jX_j|$. During the rotation of $P$,
since both $A_i(P_a(t))$ and $A_j(P_b(t))$ are always larger than
$d_k$, $P_a(t)$ cannot pass any of $X_i$ and $X_j$, and thus $P_a(t)$
cannot arrive to the position $z$ during the rotation. Hence,
$A_i(P_a(t))=A_j(P_b(t))$ cannot happen again after $t_1$.

Further, recall that $t_1$ is the first moment from the beginning of
the rotation with $A_i(P_a(t))=A_j(P_b(t))$. Without loss of
generality, we assume $A_i(P_a(t))<A_j(P_b(t))$ for any time $t<t_1$ (as
the example shown in Fig.~\ref{fig:case10}(a)). It is easy to see
that $A_i(P_a(t))>A_j(P_b(t))$ for any time $t>t_1$, which implies that
the two functions cross each other at their intersection point.

\begin{figure}[t]
\begin{minipage}[t]{\linewidth}
\begin{center}
\includegraphics[totalheight=1.5in]{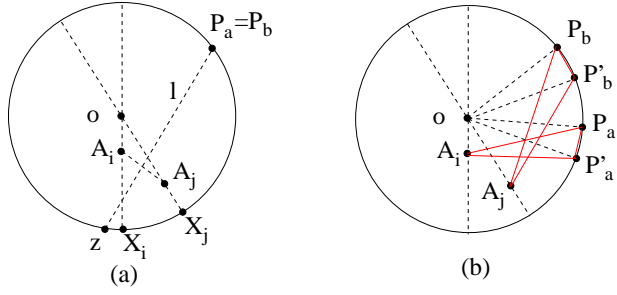}
\caption{\footnotesize Illustrating the proof
of Lemma \ref{lem:30}: (a) $P_a=P_b$; (b) $P_a\neq P_b$.} \label{fig:case10}
\end{center}
\end{minipage}
\vspace*{-0.10in}
\end{figure}

\item
$P_a\neq P_b$. At the moment $t_1$, we have $|A_iP_a(t_1)|=|A_jP_b(t_1)|$.
Assume to the contrary that at some moment $t_2>t_1$, we also have
$A_i(P_a(t_2))=A_j(P_b(t_2))$. Suppose at the moment $t_2$, $P_a(t_2)$ is at
the position $P_a'$ and $P_b(t_2)$ is at the position $P_b'$ (see
Fig.~\ref{fig:case10}(b)). Then $|A_iP'_a|=|A_jP'_b|$. Since $P_a$
and $P_b$ are rotated simultaneously, the arc distance from $P_a(t_1)$ to
$P_a'$ is equal to the arc distance from $P_b(t_1)$ to $P_b'$, and thus
$|P_a(t_1)P_a'|=|P_b(t_1)P_b'|$. Consider the two triangles $\triangle P_b(t_1)A_jP_b'$
and $\triangle P_a(t_1)A_iP_a'$ (shown with red solid segments
in Fig.~\ref{fig:case10}(b)).
Since $|A_iP_a(t_1)|=|A_jP_b(t_1)|$, $|A_iP'_a|=|A_jP'_b|$, and
$|P_a(t_1)P_a'|=|P_b(t_1)P_b'|$, $\triangle P_b(t_1)A_jP_b'$ is congruent to
$\triangle P_a(t_1)A_iP_a'$. Thus, the two angles $\angle
A_iP_a(t_1)P'_a=\angle A_jP_b(t_1)P'_b$. Further, it is easy to see
$\angle oP_a(t_1)P'_a=\angle oP_b(t_1)P'_b$. Consequently, we have $\angle
oP_a(t_1)A_i=\angle oP_b(t_1)A_j$.

But, if $\angle oP_a(t_1)A_i=\angle oP_b(t_1)A_j$, then we can show that the
two functions $A_i(P_a(t))$ and $A_j(P_b(t))$ define exactly the
same curve segment. The proof is nothing but the inverse of the
above argument. Specifically, consider any time moment $t_3>t_1$
before the end of the rotation. Suppose at the moment $t_3$, $P_a$ is
at the position $P_a''$ and $P_b$ is at the position $P_b''$. Since
$\angle oP_a(t_1)A_i=\angle oP_b(t_1)A_j$ and $\angle oP_a(t_1)P''_a=\angle
oP_b(t_1)P''_b$, we have $\angle A_iP_a(t_1)P''_a=\angle A_jP_b(t_1)P''_b$. Further,
since $|A_iP_a(t_1)|=|A_jP_b(t_1)|$ and $|P_a(t_1)P_a''|=|P_b(t_1)P_b''|$, $\triangle
P_b(t_1)A_jP_b''$ is congruent to $\triangle P_a(t_1)A_iP_a''$. Thus,
$|A_iP''_a|=|A_jP''_b|$, i.e., $A_i(P_a(t))=A_j(P_b(t))$ at any time $t=t_3> t_1$.
Similarly, we can also show that at any time moment $t_3<t_1$,
$A_i(P_a(t_3))=A_j(P_b(t_3))$. Hence, $A_i(P_a(t))$ and
$A_j(P_b(t))$ define exactly the same curve segment. But this
contradicts with the fact that the curve segments defined by these two
functions are not the same. This implies that
$A_i(P_a(t))=A_j(P_b(t))$ cannot happen again at any moment $t>t_1$.

Further, without loss of generality, we assume
$A_i(P_a(t))<A_j(P_b(t))$ for any time $t<t_1$ (as
the example shown in Fig.~\ref{fig:case10}(b)). We then show that
$A_i(P_a(t_3))>A_j(P_b(t_3))$ for any time $t_3>t_1$, which means that
the two functions cross each other at their intersection point. We
briefly discuss this.  Again, suppose at the moment $t_3$, $P_a$ is
at the position $P_a''$ and $P_b$ is at the position $P_b''$. First,
since $A_i(P_a(t))<A_j(P_b(t))$ for any time $t<t_1$, it must be
$|oA_j|>|oA_i|$ (this can be proved by similar techniques as above and
we omit the details). Consider the two triangles $\triangle oA_iP_a(t_1)$ and
$\triangle oA_jP_b(t_1)$ (at the moment $t_1$). Since
$|A_iP_a(t_1)|=|A_jP_b(t_1)|$, $|oP_a(t_1)|=|oP_b(t_1)|$, and $|oA_j|>|oA_i|$, we have
$\angle oP_b(t_1)A_j>\angle oP_a(t_1)A_i$, which further implies $\angle
A_jP_b(t_1)P_b''<A_iP_a(t_1)P_a''$. Consider the triangles $\triangle
P_b(t_1)A_jP_b''$ and $\triangle P_a(t_1)A_iP_a''$. Due to $|A_iP_a(t_1)|=|A_jP_b(t_1)|$,
$|P_a(t_1)P_a''|=|P_b(t_1)P_b''|$, and $\angle A_jP_b(t_1)P_b''<A_iP_a(t_1)P_a''$,
it must be $|A_jP_b''|<|A_iP_a''|$. In
other words, $A_i(P_a(t))>A_j(P_b(t))$ at any time $t=t_3>t_1$.
\end{itemize}

The lemma thus follows.
\end{proof}

We further extend every curve segment in $F'$ into an $x$-monotone
curve, as follows. For each increasing (resp., decreasing) curve
segment, we extend it by attaching two half-lines with slope $1$
(resp., $-1$) at the two endpoints of that curve segment,
respectively, such that the resulting new curve is still monotonically increasing
(resp., decreasing). Denote the resulting new curve set by $F$. Obviously, an
increasing curve and a decreasing curve in $F$ intersect once and
they cross each other at their intersection point. For any two different increasing (resp.,
decreasing) curves in $F$, by Lemma \ref{lem:30} and the way we
extend the corresponding curve segments, they can intersect in at
most one point and cross each other at their intersection point
(if any). In other words, $F$ can be viewed as a set of pseudolines.
Let $\calA_F$ be the arrangement of $F$. Observe that the
elements in $D'_2$ are the $y$-coordinates of a subset of the
vertices of $\calA_F$. Since $\lambda_C\in D_2'$, $\lambda_C$ is the
$y$-coordinate of a vertex of $\calA_F$. Denote by $|\calA_F|$ the
number of vertices in $\calA_F$. Of course, we do not want to compute
the vertices of $\calA_F$ explicitly.  By generalizing some techniques in
\cite{ref:ColeAn89}, we have the following lemma.

\begin{Lem}\label{lem:new50}
The value $|\calA_F|$ can be computed in $O(n\log n)$ time. Given an
integer $k$ with $1\leq k\leq |\calA_F|$, the $k$-th
highest vertex of $\calA_F$ can be found in $O(n\log^2 n)$ time.
\end{Lem}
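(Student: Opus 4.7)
The plan is to reduce both subproblems to inversion counting on the curves of $F$, and then wrap the counting routine inside a Cole-Salowe-Steiger-Szemer\'edi-style selection search.

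\textbf{Step 1: Counting $|\calA_F|$.}
Partition $F$ into the subset $F^+$ of increasing curves and the subset $F^-$ of decreasing curves. By the extensions of slope $\pm 1$ attached at the endpoints, every pair consisting of one curve of $F^+$ and one of $F^-$ crosses exactly once, contributing $|F^+|\cdot|F^-|$ vertices. By Lemma~\ref{lem:30}, two curves of the same monotonicity cross at most once; moreover, two curves in $F^+$ cross iff their left-to-right orders along a horizontal line $y=L$ with $L$ below both curves and along a horizontal line $y=U$ with $U$ above both curves disagree. Computing both orderings by evaluating each extended curve at $y=L$ and $y=U$ takes $O(n)$ time, and counting the inversions between them takes $O(n\log n)$ via merge sort. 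The same applies to $F^-$. Summing the three contributions gives $|\calA_F|$ in $O(n\log n)$ time.

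\textbf{Step 2: A decision oracle $\Pi(\bar y)$.}
For the selection task, I would first build a routine that, given any real $\bar y$, reports the number of vertices of $\calA_F$ strictly above the horizontal line $y=\bar y$. Since each curve in $F$ has $O(1)$ complexity, I can evaluate in $O(1)$ time the $x$-coordinate at which it crosses $y=\bar y$ (using $\pm\infty$ sentinels when it does not reach $\bar y$). Let $\pi^+(\bar y)$ and $\pi^-(\bar y)$ be the orderings of $F^+$ and $F^-$ by these $x$-coordinates. Two curves in $F^+$ cross above $\bar y$ iff their relative order in $\pi^+(\bar y)$ disagrees with their order at $y=U$; similarly for $F^-$ and $y=L$; and an increasing/decreasing pair crosses above $\bar y$ iff their relative order on $y=\bar y$ disagrees with their order on $y=U$ (or equivalently $y=L$). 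Each of these three quantities is again an inversion count between two permutations of size $O(n)$, giving $\Pi(\bar y)$ in $O(n\log n)$ time in total.

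\textbf{Step 3: Driving a search that uses only $O(\log n)$ queries to $\Pi$.}
The $k$-th highest vertex is the $y$-value for which $\Pi$ transitions from $<k$ to $\ge k$. To avoid listing the $O(n^2)$ candidate crossings, I would transcribe the slope-selection strategy of \cite{ref:ColeAn89}: view the sought vertex as the outcome of a parametric sorting network on $F$ where every comparison primitive asks ``is the $y$-coordinate of the crossing of $f_i$ and $f_j$ above the unknown target?'' Each comparison becomes critical at one explicit $y$-value (the $y$-coordinate of $f_i\cap f_j$), computable in $O(1)$. Processing the $O(n\log n)$ comparisons in batches and, in each round, invoking $\Pi$ on the median of the still-unresolved critical $y$-values lets us discard at least half of the outstanding comparisons per round; $O(\log n)$ rounds suffice, each costing $O(n\log n)$, yielding the claimed $O(n\log^2 n)$ bound.

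The main obstacle will be verifying that the \cite{ref:ColeAn89} machinery, originally phrased for arrangements of straight lines (slope selection), carries over cleanly to pseudoline arrangements of our $O(1)$-complexity curves $F$. The two ingredients that must transfer are (i) an $O(1)$ comparison primitive between any two curves at any query $y$-value, which holds because of the $O(1)$ complexity assumption and Lemma~\ref{lem:30}, and (ii) the batched-median scheme for turning $O(n\log n)$ parametric comparisons into $O(\log n)$ calls to $\Pi$, which uses only the total order of the critical $y$-values and is therefore insensitive to whether the underlying objects are lines or pseudolines. Once both are in place, the lemma follows.
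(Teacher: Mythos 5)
Your proposal is correct and follows essentially the same route as the paper's proof: both reduce the computation of $|\calA_F|$ to an inversion count between the orderings of the curves along a horizontal line above all vertices and one below all vertices, and both obtain the $k$-th highest vertex by transplanting the $O(n\log^2 n)$ slope-selection machinery of \cite{ref:ColeAn89}, justified by the fact (Lemma~\ref{lem:30} together with the slope-$\pm 1$ extensions) that $F$ behaves as a family of pseudolines, so that the number of vertices above any horizontal line is still an inversion count. The only difference is presentational: the paper performs the count with a single top-versus-bottom inversion count over all of $F$ rather than splitting into $F^+$, $F^-$, and mixed pairs as you do.
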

\begin{proof}
First of all, because every function in $F'$ is of $O(1)$ complexity,
we can determine in $O(1)$ time whether a curve segment in $F'$
intersects a given horizontal line, and if ``yes", then compute the
intersection. Thus, for every curve in $F$, we can also compute its
intersection with any horizontal line in $O(1)$ time. Let
$N=|F|\leq 2n$.

Recall that the curve segments in $F'$ are all in the horizontal strip between
$y=d_k$ and $y=d_{k+1}$. Thus, all vertices of $\calA_F$ above the
horizontal line $y=d_{k+1}$ are intersections of the newly attached half-lines.
We can easily determine the highest vertex of $\calA_F$ in
$O(n\log n)$ time, e.g., by using the approach in \cite{ref:ColeAn89}. Let
$l$ be a horizontal line higher than the highest vertex. Denote by
$f_1,f_2,\ldots,f_N$ the sequence of the curves of $F$ sorted in
increasing order of the $x$-coordinates of their intersections with
$l$. Similarly, we can determine the lowest vertex of $\calA_F$ in
$O(n\log n)$ time. Let $f_{\pi(1)},f_{\pi(2)},\ldots,f_{\pi(N)}$ be
the sequence of the curves of $F$ sorted in increasing order of the
$x$-coordinates of their intersections with a horizontal line below
the lowest vertex of $\calA_F$. Since the curves in $F$ can be
viewed as a set of pseudolines, as in \cite{ref:ColeAn89}, the
number of inversions in the permutation $\pi$, which can be computed
in $O(n\log n)$ time, is equal to $|\calA_F|$. In summary, we can
compute $|\calA_F|$ in $O(n\log n)$ time.

To compute the $k$-th highest vertex of $\calA_F$, we choose to
generalize the $O(n\log^2 n)$ time algorithm in \cite{ref:ColeAn89}.
Let $L$ be a set of $n$ lines in the plane and $\calA_L$ be the arrangement of $L$.
An $O(n\log^2 n)$ time algorithm was given in \cite{ref:ColeAn89} for
computing the $k$-th highest vertex of $\calA_L$ ($1\leq k\leq |\calA_L|$)
in $O(n\log^2 n)$ time based on parametric search \cite{ref:ColeSl87,ref:MegiddoAp83}. The main
property used in the algorithm \cite{ref:ColeAn89} is the following one. Denote by
$l_1,l_2,\ldots,l_n$ the sequence of lines in $L$ sorted in
increasing order of their intersections with a horizontal line above
the highest vertex of $\calA_L$. Given any horizontal line $l'$, let
$l_{\pi(1)},l_{\pi(2)},\ldots,l_{\pi(n)}$ be the sequence of lines
of $L$ sorted in increasing order of their intersections with $l'$.
Then, the number of vertices of $\calA_L$ above $l'$ is equal to the
number of inversions in the permutation $\pi$.

In our problem, since
any two curves in $F$ can intersect each other in at most one point and they
cross each other at their intersection point, the above property still holds for
$\calA_F$. Thus, the $O(n\log^2 n)$ time algorithm in
\cite{ref:ColeAn89} is applicable to our problem. Therefore, we can
find the $k$-th highest vertex of $\calA_F$ in $O(n\log^2 n)$ time, and
the lemma follows.
\end{proof}

A remark: An optimal $O(n\log n)$ time algorithm was also given in
\cite{ref:ColeAn89} (and in \cite{ref:KatzOp93}) for finding the
$k$-th highest vertex of $\calA_L$. However, these algorithms are overly
complicated. Although we think that the $O(n\log n)$ time approach in \cite{ref:ColeAn89} may
be made work for our problem, it does not
benefit our overall solution for the optimization version of the
min-sum problem because its total time is dominated by
other parts of the algorithm. Hence, the much simpler $O(n\log^2 n)$
time solution (for finding the $k$-th highest vertex of $\calA_F$)
suffices for our purpose.

Recall that $\lambda_C$ is the $y$-coordinate of a vertex of
$\calA_F$. Our algorithm for computing $\lambda_C$ then works as follows.
First, compute $|\calA_F|$. Next, find the $(|\calA_F|/2)$-th
highest vertex of $\calA_F$, and denote its $y$-coordinate by
$\lambda_m$. Determine whether $\lambda_C\leq \lambda_m$ by the
algorithm in Theorem \ref{theo:20}, after which one half of the
vertices of $\calA_F$ can be pruned away. We apply the above
procedure recursively on the remaining vertices of $\calA_F$, until
$\lambda_C$ is found. Since there are $O(\log n)$ recursive calls to this procedure,
each of which takes $O(n\log^2 n)$, the
total time for computing $\lambda_C$ is $O(n\log^3 n)$.


\begin{Theo}
The min-max optimization problem is solvable in $O(n\log^3n)$ time.
\end{Theo}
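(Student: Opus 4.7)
The plan is to assemble the machinery already developed in Section~\ref{sec:optimization}. First I would dispose of the easy sub-case by running the decision algorithm of Theorem~\ref{theo:20} at the value $\max_{0\leq h\leq n-1}|A_hX_h|$ in $O(n\log^2 n)$ time; if this value is at least $\lambda_C$, then by Lemma~\ref{lem:10} combined with the easy observation $\max_h |A_hX_h|\leq \lambda_C$, equality holds and we are done. Otherwise I know $\lambda_C$ is strictly greater, and by Lemma~\ref{lem:10} it must be a critical equal distance in $D_2$.

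Next I would invoke Lemma~\ref{lem:new20} to locate, in $O(n\log^3 n)$ time, the unique interval $(d_k,d_{k+1}]$ containing $\lambda_C$ (and, by a further check with Theorem~\ref{theo:20}, reduce to the case $\lambda_C\in(d_k,d_{k+1})$). After redefining $d_k$ to absorb the quantity $\max_h|A_hX_h|$, the set $D_2'$ of critical equal distances lying in $(d_k,d_{k+1})$ has size $O(n^2)$ and contains $\lambda_C$. I would then set up the curve family $F$ of at most $2n$ $x$-monotone pseudolines described in the text, so that, by Lemma~\ref{lem:30} and the extension argument, $\lambda_C$ is the $y$-coordinate of some vertex of the arrangement $\calA_F$.

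Finally I would perform a binary search over the vertex list of $\calA_F$ sorted by $y$-coordinate, never materialising this list explicitly. Using Lemma~\ref{lem:new50}, compute $|\calA_F|$ once in $O(n\log n)$ time; then repeatedly extract the median remaining vertex via the $k$-th-highest-vertex routine in $O(n\log^2 n)$ time, feed its $y$-coordinate to the Theorem~\ref{theo:20} decision oracle in $O(n\log^2 n)$ time, and discard the half of the vertex range ruled out by the answer. After $O(\log n)$ iterations the search pins down $\lambda_C$, so the total cost is $O(n\log^3 n)$, dominating all other steps and proving the theorem.

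The only delicate point — and the step I would think of as the main obstacle — is ensuring that the binary search really is well-defined on $\calA_F$: it requires both that $\lambda_C$ appear as a vertex $y$-coordinate (supplied by Lemma~\ref{lem:10} together with the pseudoline property of $F$) and that the ``$k$-th highest vertex'' selection of Lemma~\ref{lem:new50} interact cleanly with the decision procedure's threshold comparisons. Once one verifies that the selection routine returns the actual vertex (not just an approximation) and that ties with distances in $D_1$ have already been handled by the preliminary Theorem~\ref{theo:20} call, the recursion collapses to a clean $O(\log n)$-round binary search and the running time follows immediately.
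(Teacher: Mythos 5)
Your proposal is correct and follows essentially the same route as the paper: dispose of the $D_1$ candidates via one call to Theorem~\ref{theo:20}, localize $\lambda_C$ to $(d_k,d_{k+1})$ with Lemma~\ref{lem:new20}, reduce the remaining candidates to $y$-coordinates of vertices of the pseudoline arrangement $\calA_F$, and binary-search those vertices using the $k$-th-highest-vertex routine of Lemma~\ref{lem:new50} together with the decision oracle, for $O(\log n)$ rounds of $O(n\log^2 n)$ each.
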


\section{The Min-sum Problem}
\label{sec:minsum}

In this section, we present our new algorithms for the min-sum problem. We show that the
boundary case of this problem is solvable in $O(n^2)$ time, which improves the
$O(n^4)$ time result in \cite{ref:TanNe10}. We also give an $O(n^2)$ time
approximation algorithm with approximation ratio $3$,
which improves the $(1+\pi)$-approximation $O(n^2)$ time
algorithm in \cite{ref:BhattacharyaOp09}.

For the boundary case, the $O(n^4)$ time algorithm in
\cite{ref:TanNe10} uses the $O(n^3)$ time Hungarian algorithm
to compute a minimum weight perfect matching in a complete bipartite graph.
However, the graph for this case is very special in the sense that all its
vertices lie on the boundary of a circle. By using the result in \cite{ref:BussLi98},
we can actually find a minimum weight perfect matching in such a graph
in $O(n)$ time. Therefore, if we follow the algorithmic scheme in
\cite{ref:TanNe10} but replace the Hungarian algorithm by the algorithm in
\cite{ref:BussLi98}, the boundary case can be solved in
$O(n^2)$ time. For completeness, more details on the proof of the
following theorem are given in Appendix \ref{app:Theorem40}.

\begin{Theo}\label{theo:40}
The boundary case of the min-sum problem can be solved in $O(n^2)$
time.
\end{Theo}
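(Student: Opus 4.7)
The plan is to adapt the algorithmic scheme of Tan and Wu \cite{ref:TanNe10} for the boundary case and to replace its bottleneck subroutine by a faster one that exploits the special geometric structure. I would first recall the high-level structure of the $O(n^4)$ algorithm: by continuity and symmetry, the regular $n$-gon in any optimal solution can be restricted, up to rotation, to a set of $O(n)$ combinatorially distinct orientations (essentially, one rotates the $n$-gon over an arc of length $2\pi/n$, and the relative sorted order of sensors versus $n$-gon vertices changes only $O(n)$ times). For each such candidate orientation, the optimal assignment is a minimum weight perfect matching in the complete bipartite graph between the $n$ sensors and the $n$ vertices of the $n$-gon, with edge weights equal to Euclidean distances. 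Taking the best over all $O(n)$ candidates gives the optimum.

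The next step is to observe that in the boundary case, both sides of this bipartite matching instance lie on $\partial C$, i.e., on a common circle. For such instances, Buss and Yianilos \cite{ref:BussLi98} give a linear-time algorithm for minimum weight perfect matching (their result applies to point sets on a convex curve, where the Euclidean bipartite matching has a characterizable non-crossing optimal structure). Thus, for each of the $O(n)$ candidate orientations, the per-matching cost drops from $O(n^3)$ (Hungarian algorithm) to $O(n)$, and the total running time becomes $O(n^2)$.

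Finally, I would fill in the bookkeeping across candidate orientations so that the amortized work per orientation is indeed $O(n)$: maintain, as the $n$-gon rotates, the sorted cyclic order of sensors interleaved with $n$-gon vertices on $\partial C$, and supply this order to the matching routine of \cite{ref:BussLi98} at each of the $O(n)$ breakpoints. Constructing the initial order takes $O(n\log n)$ time, and each update between consecutive breakpoints is $O(1)$, so this bookkeeping does not increase the asymptotic cost.

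The main obstacle I anticipate is the verification step for applying \cite{ref:BussLi98}: one must check that the hypotheses of their linear-time matching result hold in our setting, namely that the two point sets are given in sorted order along the common circle and that the cost function (Euclidean distance between points on a circle) fits into their framework. Assuming this (as the excerpt does), the rest of the argument is a clean substitution into the scheme of \cite{ref:TanNe10}, and Theorem \ref{theo:40} follows; the expanded verification is deferred to Appendix \ref{app:Theorem40}.
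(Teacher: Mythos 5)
Your proposal is correct and follows essentially the same route as the paper: restrict attention to $O(n)$ candidate placements of the regular $n$-gon (the paper does this by citing the lemma from \cite{ref:TanNe10} that some optimal solution leaves one sensor unmoved, which pins the $n$-gon to one of the $n$ polygons $P(A_i)$ --- equivalent to your breakpoint restriction), and then replace the $O(n^3)$ Hungarian algorithm by the linear-time minimum-weight matching algorithm of \cite{ref:BussLi98} for bipartite point sets on a common circle, giving $O(n)$ per candidate and $O(n^2)$ overall. The one soft spot is that your ``continuity and symmetry'' justification of the candidate restriction still needs an argument that the optimum within each combinatorial interval is attained at a breakpoint, but this is precisely the content of the lemma the paper imports from \cite{ref:TanNe10}, so your reliance on prior work matches the paper's.
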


Next, we discuss our approximation algorithm for the general min-sum
problem.

Let $A_0,\ldots,A_{n-1}$ be the sensors in $C$. Our approximation
algorithm works as follows. (1) For each sensor $A_i$,
$i=0,\ldots,n-1$, compute the point $X_i$ on $\partial C$ that is
closest to $A_i$. (2) By using the algorithm in Theorem \ref{theo:40},
solve the following min-sum boundary case problem: Viewing the $n$
points $X_0,X_1,\ldots,X_{n-1}$ as {\em pseudo-sensors} (which all
lie on $\partial C$), find $n$ points on $\partial C$ as the goal
positions for the pseudo-sensors such that the sum of the distances
traveled by all $n$ pseudo-sensors is minimized. Let $X'_i$ be the goal
position for each $X_i$ ($0\leq i \leq n - 1$) in the optimal
solution thus obtained. We then let $X_i'$ be the goal position for each sensor
$A_i$, $0\leq i\leq n-1$, for our original min-sum problem. This completes the
description of our approximation algorithm.

Clearly, with Theorem \ref{theo:40}, the time complexity of the
above approximation algorithm is $O(n^2)$. The lemma below shows that the
approximation ratio of this algorithm is $3$.

\begin{Lem}\label{lem:40}
The approximation ratio of our approximation algorithm is $3$.
\end{Lem}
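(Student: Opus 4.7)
The plan is to bound our algorithm's cost $SOL=\sum_{i=0}^{n-1}|A_iX_i'|$ above by $3\cdot OPT$, where $OPT=\sum_{i=0}^{n-1}|A_iA_i^*|$ is the optimal cost of the original min-sum problem with $A_i^*$ being the goal position of $A_i$ in a fixed optimal solution. The proof will rest on two applications of the triangle inequality together with two easy observations: (i) $X_i$ is the point on $\partial C$ closest to $A_i$; and (ii) $X_i'$ is optimal for the boundary min-sum subproblem on the pseudo-sensors $X_0,\dots,X_{n-1}$.

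First I would apply the triangle inequality on each summand to split $SOL$ into two pieces:
\[
|A_iX_i'|\ \le\ |A_iX_i|+|X_iX_i'|,\qquad\text{so}\qquad SOL\ \le\ \sum_{i=0}^{n-1}|A_iX_i|\ +\ \sum_{i=0}^{n-1}|X_iX_i'|.
\]
For the first piece, since $X_i$ is the closest point on $\partial C$ to $A_i$ and $A_i^*\in\partial C$, we have $|A_iX_i|\le|A_iA_i^*|$ for every $i$. Summing yields
\[
\sum_{i=0}^{n-1}|A_iX_i|\ \le\ \sum_{i=0}^{n-1}|A_iA_i^*|\ =\ OPT.
\]

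For the second piece, I would exhibit a feasible solution to the boundary min-sum problem on the pseudo-sensors $X_0,\dots,X_{n-1}$ and use optimality of $X_i'$. Specifically, the $n$ points $A_0^*,\dots,A_{n-1}^*$ form a regular $n$-gon on $\partial C$, and the assignment that sends pseudo-sensor $X_i$ to vertex $A_i^*$ is a feasible solution of the boundary subproblem. Hence
\[
\sum_{i=0}^{n-1}|X_iX_i'|\ \le\ \sum_{i=0}^{n-1}|X_iA_i^*|\ \le\ \sum_{i=0}^{n-1}\bigl(|X_iA_i|+|A_iA_i^*|\bigr)\ \le\ OPT+OPT\ =\ 2\cdot OPT,
\]
where the second inequality is the triangle inequality and the third uses the bound on $\sum|A_iX_i|$ above.

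Combining the two bounds gives $SOL\le OPT+2\cdot OPT=3\cdot OPT$, proving the claimed approximation ratio. There is no substantive obstacle here; the only thing to state carefully is that the algorithm of Theorem \ref{theo:40} is free to choose both the regular $n$-gon and the assignment, so $(A_0^*,\dots,A_{n-1}^*)$ paired with the identity assignment is indeed admissible as a competitor for the boundary subproblem, which is what makes the inequality $\sum|X_iX_i'|\le\sum|X_iA_i^*|$ valid.
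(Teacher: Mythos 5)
Your proof is correct and follows essentially the same route as the paper's: the same triangle-inequality decomposition $|A_iX_i'|\le|A_iX_i|+|X_iX_i'|$, the same use of $|A_iX_i|\le|A_iA_i^*|$ via the closest-point property, and the same comparison of $X_i'$ against the feasible competitor $A_i^*$ for the boundary subproblem, followed by one more triangle inequality. The only difference is cosmetic (you bound the two sums separately rather than in one chain of inequalities), and your explicit remark about why $(A_0^*,\dots,A_{n-1}^*)$ is an admissible competitor is a nice clarification the paper leaves implicit.
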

\begin{proof}
Let $\Delta = \sum_{i=0}^{n-1} |A_{i} X'_{i}|$. Let
$A^{\ast}_0,A^{\ast}_1,\ldots, A^{\ast}_{n-1}$ be the goal positions
of all sensors (i.e., $A^*_i$ is the goal position for each sensor
$A_i$, $0\leq i\leq n-1$) in an optimal solution for the
min-sum problem. Let $\Delta_C =\sum_{i=0}^{n-1} |A_{i} A^*_{i}|$.
Our task is to prove $\Delta\leq 3\cdot \Delta_C$.

First, \( \sum_{i=0}^{n-1} |X_{i} X'_{i}| \) $\leq$ \(
\sum_{i=0}^{n-1} |X_{i} A^{\ast}_{i}| \), and $|A_i X_i| \leq |A_i
A^{\ast}_i|$ holds for each $0 \leq i \leq n-1$. Then,
\begin{eqnarray*}
\Delta  & =    & \sum_{i=0}^{n-1} |A_{i} X'_{i}| \leq \sum_{i=0}^{n-1} (|A_{i} X_{i}| + |X_i X'_{i}|) \hspace{0.5cm} (triangle \; inequality) \\
        & =    & \sum_{i=0}^{n-1} |A_{i} X_{i}| + \sum_{i=0}^{n-1} |X_{i} X'_{i}| \leq \sum_{i=0}^{n-1} |A_{i} X_{i}| + \sum_{i=0}^{n-1} |X_{i} A^{\ast}_{i}| \\
        & \leq & 2 \cdot \sum_{i=0}^{n-1} |A_{i} X_{i}| + \sum_{i=0}^{n-1} |A_{i} A^{\ast}_{i}| \hspace{0.5cm} (triangle \; inequality) \\
        & \leq & 3 \cdot \sum_{i=0}^{n-1} |A_{i} A^{\ast}_{i}| =  3 \cdot \Delta_C.
\end{eqnarray*}
The lemma thus follows.
\end{proof}

Hence, we conclude with the following result.

\begin{Theo}
There exists an $O(n^2)$ time approximation algorithm for the
min-sum problem with approximation ratio 3.
\end{Theo}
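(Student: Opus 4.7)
The plan is to assemble the theorem directly from the two ingredients already prepared in the preceding discussion. First, I would verify the running time of the approximation algorithm described just before Lemma \ref{lem:40}. Step 1 of that algorithm computes, for each sensor $A_i$, the closest boundary point $X_i$; since $X_i$ is the intersection of the ray from the center $o$ through $A_i$ with $\partial C$, it can be found in $O(1)$ time per sensor, and thus $O(n)$ time in total. Step 2 treats $X_0, X_1, \ldots, X_{n-1}$ as $n$ pseudo-sensors on $\partial C$ and feeds them to the boundary case algorithm of Theorem \ref{theo:40}, which runs in $O(n^2)$ time. Summing the two contributions gives an overall bound of $O(n^2)$.

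Second, I would invoke Lemma \ref{lem:40} for the approximation factor of $3$. That lemma has already been established by the short triangle-inequality chain comparing $\sum_i |A_i X'_i|$ to $\sum_i |A_i A^{\ast}_i|$, so no further work is required on the approximation-quality side.

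There is no genuine obstacle here: the theorem is merely the conjunction of Theorem \ref{theo:40} (for the time bound of the inner boundary case solver) and Lemma \ref{lem:40} (for the approximation ratio). The only minor sanity check worth making is that several of the $X_i$'s may coincide when multiple sensors happen to be collinear with $o$, but the boundary case algorithm underlying Theorem \ref{theo:40} handles arbitrary multisets of points on $\partial C$ without trouble, since any coincident pair simply contributes a zero-length segment to the resulting matching. Combining the $O(n^2)$ runtime with the $3$-approximation guarantee yields the theorem as stated.
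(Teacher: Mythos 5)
Your proposal is correct and matches the paper's own treatment exactly: the theorem is obtained by combining the $O(n^2)$ bound from Theorem \ref{theo:40} applied to the pseudo-sensors $X_0,\ldots,X_{n-1}$ with the ratio-$3$ guarantee of Lemma \ref{lem:40}. The extra remark about coincident $X_i$'s is a harmless addition that the paper does not bother to make.
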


\footnotesize
\baselineskip=11.0pt
\bibliographystyle{plain}
\bibliography{reference}

\newpage
\normalsize
\appendix
\section*{Appendix}

\section{The Proof of Lemma \ref{lem:10}}
\label{app:lemmaproofs}

\noindent
{\bf Lemma \ref{lem:10}\ }\cite{ref:TanNe10}
{\em
Suppose an optimal solution for the min-max optimization problem is achieved
with $\lambda_C = |A_{i} A'_{i}|$ for some $i\in \{0,\ldots,
n-1\}$. Then either $A'_{i}$ is the point $X_{i}$, or there is
another sensor $A_{j}$ ($j \neq i$) such that $\lambda_C = |A_{j}
A'_{j}|$ also holds. In the latter case, any slight rotation of the
regular $n$-gon that achieves $\lambda_C$ in either direction
causes the value of $\lambda_C$ to increase (i.e., it makes one of the two
distances $|A_i A'_i|$ and $|A_j A'_j|$ increase and the other one decrease).
}
\vspace{0.15in}

\begin{proof}
First assume that in an optimal solution,
the sensor $A_{i}$ is the only one satisfying $\lambda_C = |A_{i} A'_{i}|$,
but $A'_i$ is not the point $X_{i}$. Thus $|A_{i} X_{i}| < |A_{i}
A'_{i}|$ holds (see Figure 1(a)).
Then, we rotate the regular $n$-gon that achieves $\lambda_C$ by moving
the vertex $A'_{i}$ towards $X_{i}$, with a very small distance $\epsilon$. Clearly,
the distance function between $A_{i}$ and $A'_{i}$ decreases monotonically during
this rotation of that $n$-gon. Denote by $A''_{0}$, $A''_{1}$, $\ldots$, $A''_{n-1}$
the new positions of the sensors after the rotation stops. Since $\epsilon$ is
arbitrarily small and $A_{i}$ is the only sensor satisfying $\lambda_C = |A_{i} A'_{i}|$,
we have $|A_{i} A''_{i}| \geq |A_{k} A''_{k}|$ for all $k \neq i$;
moreover, $|A_{i} A''_{i}| < |A_{i} A'_{i}|$ holds. But, this contradicts with
the assumption that
$\lambda_C = |A_{i} A'_{i}|$ gives an optimal solution to the min-max optimization problem.

Suppose now there exists another sensor $A_j$ such that the optimal value
$\lambda_C = |A_{j} A'_{j}|$ ($j \neq i$) also holds (see Figure 1(b)).
A slight rotation of the regular $n$-gon that achieves $\lambda_C$
in either direction cannot make $|A_{i} A''_{i}| < |A_{i} A'_{i}|$
and $|A_{j} A''_{j}| < |A_{j} A'_{j}|$ both occur, where $A''_i$ and $A''_j$
are the new positions of $A'_i$ and $A'_j$ after the rotation stops
(otherwise, it would contradict with the assumption that $\lambda_C = |A_{i} A'_{i}|=
|A_{j} A'_{j}|$ gives an optimal solution).
Hence, the rotation of the regular $n$-gon that achieves $\lambda_C$
increases one of the two distances
$|A_{i} A'_{i}|$ and $|A_{j} A'_{j}|$, while decreasing the other.
The proof is thus complete.
\end{proof}

\section{The Proof of Theorem \ref{theo:40}}
\label{app:Theorem40}

Recall that in the boundary case of the min-sum problem, all sensors
are on the boundary $\partial C$ of $C$. Let $A_0$, $A_1$, $\ldots$, $A_{n-1}$
denote the initial positions of the $n$ sensors on $\partial C$, and
$A'_0$, $A'_1$, $\ldots$, $A'_{n-1}$ denote their goal positions on
$\partial C$ that form a regular $n$-gon. Denote by $\Delta_C$ the
sum of the distances traveled by all $n$ sensors in an optimal solution
of the min-sum problem, i.e., $\Delta_C =\min \sum_{i=0}^{n-1}
|A_{i} A'_{i}|$. The following lemma has been proved in
\cite{ref:TanNe10}.

\begin{Lem}\cite{ref:TanNe10}
\label{lem-not-move}
There exists an optimal solution for the boundary case of the
min-sum problem with the following property: There exists a sensor $A_i$ which
does not move, i.e., $A_i=A_i'$.
\end{Lem}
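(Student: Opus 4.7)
The plan is to prove this by a strict concavity argument on the total distance, viewed as a function of the rotation angle with the matching held fixed. Fix any optimal solution, specified by a rotation $\theta^*$ of the regular $n$-gon and a matching $\sigma^*$ between sensors and vertices, so that $\Delta_C=\sum_{i=0}^{n-1}|A_iA'_{\sigma^*(i)}(\theta^*)|$. Let $f(\theta):=\sum_{i=0}^{n-1}|A_iA'_{\sigma^*(i)}(\theta)|$ denote the total distance when the $n$-gon is rotated to angle $\theta$ while the matching $\sigma^*$ is kept fixed. By global optimality, for every $\theta$ we have $f(\theta)\geq \Delta_C=f(\theta^*)$, so $\theta^*$ is a global minimum of $f$.

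Writing each sensor $A_i$ and each $n$-gon vertex $A'_k(\theta)$ via its angular position on the unit circle $\partial C$, each term of $f$ equals $|A_iA'_{\sigma^*(i)}(\theta)|=2|\sin((\theta-\phi_i)/2)|$ for a constant $\phi_i$ depending on $A_i$ and $\sigma^*(i)$. As a function of $\theta\in\mathbb{R}$ this chord length is $2\pi$-periodic, has corners exactly at $\theta\equiv\phi_i\pmod{2\pi}$ (which is precisely the condition $A_i=A'_{\sigma^*(i)}(\theta)$ that sensor $i$ coincides with its matched goal), and is smooth elsewhere. On any smooth piece, where $\sin((\theta-\phi_i)/2)$ has a definite sign, a short computation yields the second derivative $-\tfrac12|\sin((\theta-\phi_i)/2)|<0$, so every individual term is strictly concave there. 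Summing over $i$, $f$ is smooth and strictly concave on every open interval between two consecutive corner values $\phi_i\pmod{2\pi}$.

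Now, a strictly concave function on an open interval cannot attain a minimum at an interior point: if $\theta^*$ lay strictly between two adjacent corners $\phi_{i_1}<\phi_{i_2}$ of $f$, writing $\theta^*=\alpha\phi_{i_1}+(1-\alpha)\phi_{i_2}$ with $\alpha\in(0,1)$, strict concavity would give
\[
f(\theta^*)\;>\;\alpha\,f(\phi_{i_1})+(1-\alpha)\,f(\phi_{i_2})\;\geq\;\min\{f(\phi_{i_1}),\,f(\phi_{i_2})\},
\]
contradicting the global minimality of $f$ at $\theta^*$. By continuity and periodicity the minimum of $f$ is attained, so $\theta^*\equiv\phi_i\pmod{2\pi}$ for some $i$, which means $A_i=A'_{\sigma^*(i)}(\theta^*)$. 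Thus sensor $A_i$ does not move in this optimal solution, as required.

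The main technical step is the second-derivative calculation that establishes strict concavity of each single chord-length term; this is elementary but deserves care to handle the sign of $\sin((\theta-\phi_i)/2)$ on the two sides of each corner (and in particular to see that the second derivative remains strictly negative even at the diametral point $\theta-\phi_i\equiv\pi$). Once strict concavity of each term is in hand, concavity of the sum and the no-interior-minimum step are immediate. In fact the argument shows the stronger statement that \emph{every} optimal solution contains some sensor that does not move; the lemma's existential form follows as a direct corollary.
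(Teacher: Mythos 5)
The paper itself offers no proof of this lemma --- it is imported from \cite{ref:TanNe10} with only a citation --- so there is no in-paper argument to compare against; I will instead assess your proof on its own terms. Your concavity argument is correct and self-contained. Since all sensors lie on $\partial C$ in the boundary case, each term of $f$ is a chord length $2|\sin((\theta-\phi_i)/2)|$, whose second derivative on any arc avoiding $\theta\equiv\phi_i$ is $-\tfrac12|\sin((\theta-\phi_i)/2)|<0$; hence $f$ is strictly concave between consecutive corner values, a global minimizer cannot lie in the interior of such an interval, and because the corner of the $i$-th term occurs exactly when $A_i$ coincides with its matched vertex, the optimal rotation $\theta^*$ must leave some sensor stationary. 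The one step worth tightening is the chord inequality $f(\theta^*)>\alpha f(\phi_{i_1})+(1-\alpha)f(\phi_{i_2})$: you apply it with the endpoints of the \emph{closed} interval, whereas strict concavity was established only on the open interval. This is harmless --- each term is a branch of $\pm 2\sin(u/2)$ with $u\in[0,2\pi]$, hence concave up to and including the corners and nowhere affine, so equality in the chord inequality for the concave sum would force affineness and is impossible --- but a sentence saying so would make the step airtight. Your closing observation that the argument proves the stronger universal statement (every optimal solution, not just some optimal solution, has a non-moving sensor) is also correct and exceeds what the lemma requires.
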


Based on Lemma \ref{lem-not-move}, the boundary case can be solved as
follows. For each sensor $A_i$, $0\leq i\leq n-1$, let $P(A_i)$ be
the regular $n$-gon on $\partial C$ such that $A_i$ is one of its
vertices. Denote by $H_i$ the complete bipartite graph between the
set of all sensors and the set of all vertices of $P(A_i)$ such
that the weight of an edge connecting a sensor and a vertex of
$P(A_i)$ is defined as their Euclidean distance. We compute a
minimum weight perfect matching $M_i$ in $H_i$, for each $0\leq
i\leq n-1$, and the one that gives the minimum weight defines an
optimal solution for our original problem. Here, the weight of a
perfect matching is the sum of all edge weights of the matching.

The running time of the above algorithm is dominated by the step of
computing the minimum weight perfect matchings in the graphs $H_i$.
The algorithm in \cite{ref:TanNe10} uses the $O(n^3)$ time
Hungarian algorithm for computing such matchings in the graphs $H_i$.

Let $H$ be a complete bipartite graph with two vertex sets of
cardinalities $n_1$ and $n_2$, respectively, such that all its vertices
lie on the boundary of a circle and each edge weight is the
Euclidean distance between two such vertices (the edges are
represented implicitly). A maximum cardinality matching of $H$ consists of
$\min\{n_1,n_2\}$ edges. An algorithm was given in
\cite{ref:BussLi98} for computing a
minimum weight maximum cardinality matching in $H$ in $O(n_1+n_2)$ time (i.e., the total sum
of edge weights in the output maximum cardinality matching is as small as possible).

Since in our algorithm, all vertices of every complete bipartite
graph $H_i$ lie on $\partial C$, the linear time algorithm in
\cite{ref:BussLi98} can be applied to compute a minimum weight
maximum cardinality matching of $H_i$ in $O(n)$ time, for
$0\leq i\leq n-1$. (Note that a maximum
cardinality matching in the graph $H_i$ is a perfect matching, and vice versa.)
Consequently, the total running time of our algorithm is $O(n^2)$.
Theorem \ref{theo:40} thus follows.
\end{document}